\documentclass[11pt]{article}

\usepackage{amsthm}
\usepackage{graphicx} 
\usepackage{array} 

\usepackage{amsmath, amssymb, amsfonts, verbatim}
\usepackage{hyphenat,epsfig,subcaption,multirow}
\usepackage[font=small,labelfont=bf]{caption}
\usepackage{nicefrac}
\usepackage{paralist}

\usepackage{tocloft}

\usepackage[usenames,dvipsnames]{xcolor}
\usepackage[ruled]{algorithm2e}

\DeclareFontFamily{U}{mathx}{\hyphenchar\font45}
\DeclareFontShape{U}{mathx}{m}{n}{
      <5> <6> <7> <8> <9> <10>
      <10.95> <12> <14.4> <17.28> <20.74> <24.88>
      mathx10
      }{}
\DeclareSymbolFont{mathx}{U}{mathx}{m}{n}
\DeclareMathSymbol{\bigtimes}{1}{mathx}{"91}

\usepackage{tcolorbox}
\tcbuselibrary{skins,breakable}
\tcbset{enhanced jigsaw}

\usepackage[normalem]{ulem}
\usepackage[compact]{titlesec}

\definecolor{DarkRed}{rgb}{0.5,0.1,0.1}
\definecolor{DarkBlue}{rgb}{0.1,0.1,0.5}

\usepackage{nameref}
\definecolor{ForestGreen}{rgb}{0.1333,0.5451,0.1333}
\definecolor{Red}{rgb}{0.9,0,0}
\usepackage[linktocpage=true,
	pagebackref=true,colorlinks,
	linkcolor=DarkRed,citecolor=ForestGreen,
	bookmarks,bookmarksopen,bookmarksnumbered]
	{hyperref}
\usepackage[noabbrev,nameinlink]{cleveref}
\crefname{property}{property}{Property}
\creflabelformat{property}{(#1)#2#3}
\crefname{equation}{eq}{Eq}
\creflabelformat{equation}{(#1)#2#3}

\usepackage{bm}
\usepackage{url}
\usepackage{xspace}
\usepackage[mathscr]{euscript}

\usepackage{tikz}
\usetikzlibrary{arrows}
\usetikzlibrary{arrows.meta}
\usetikzlibrary{shapes}
\usetikzlibrary{backgrounds}
\usetikzlibrary{positioning}
\usetikzlibrary{decorations.markings}
\usetikzlibrary{decorations.pathreplacing} 
\usetikzlibrary{patterns}
\usetikzlibrary{calc}
\usetikzlibrary{fit}
\usetikzlibrary{decorations}

\usepackage[framemethod=TikZ]{mdframed}

\usepackage[noend]{algpseudocode}
\makeatletter
\def\BState{\State\hskip-\ALG@thistlm}
\makeatother

\usepackage{cite}
\usepackage{enumitem}
\setlist[itemize]{leftmargin=20pt}
\setlist[enumerate]{leftmargin=20pt}

\usepackage[margin=1in]{geometry}

\usepackage{thmtools}
\usepackage{thm-restate}

\newtheorem{theorem}{Theorem}
\newtheorem{lemma}{Lemma}[section]
\newtheorem{proposition}[lemma]{Proposition}

\newtheorem{invariant}[lemma]{Invariant}

\crefname{question}{Question}{Questions}

\newtheorem{problem}{Problem}

\newtheorem*{claim*}{Claim}
\newtheorem*{assumption*}{Assumption}
\newtheorem*{proposition*}{Proposition}
\newtheorem*{lemma*}{Lemma}

\newtheorem{observation}[lemma]{Observation}

\newtheorem*{theorem*}{Theorem}

\crefname{lemma}{Lemma}{Lemmas}
\crefname{claim}{claim}{claims}
\crefname{property}{Property}{Properties}
\crefname{invariant}{Invariant}{Invariants}

\newtheorem{mdresult}{Result}
\newenvironment{result}{\begin{mdframed}[backgroundcolor=lightgray!40,topline=false,rightline=false,leftline=false,bottomline=false,innertopmargin=5pt]\begin{mdresult}}{\end{mdresult}\end{mdframed}}

\theoremstyle{definition}

\newtheorem*{mdproblem*}{Problem}
\newenvironment{Problem*}{\begin{mdframed}[hidealllines=false,innerleftmargin=10pt,backgroundcolor=gray!10,innertopmargin=5pt,innerbottommargin=5pt,roundcorner=10pt]\begin{mdproblem*}}{\end{mdproblem*}\end{mdframed}}
\newtheorem{mddefinition}[lemma]{Definition}

\newtheorem*{mddefinition*}{Definition}
\newenvironment{Definition*}{\begin{mdframed}[hidealllines=false,innerleftmargin=10pt,backgroundcolor=white!10,innertopmargin=5pt,innerbottommargin=5pt,roundcorner=10pt]\begin{mddefinition*}}{\end{mddefinition*}\end{mdframed}}
\newtheorem{mdremark}{Remark}

\newenvironment{ourbox}{\begin{mdframed}[hidealllines=false,innerleftmargin=10pt,backgroundcolor=white!10,innertopmargin=5pt,innerbottommargin=5pt,roundcorner=10pt]}{\end{mdframed}}

\newtheorem{mdalgorithm}{Algorithm}
\newenvironment{Algorithm}{\begin{ourbox}\begin{mdalgorithm}}{\end{mdalgorithm}\end{ourbox}}

\newtheorem{mddistribution}{Distribution}

\newtheorem{mddistributionpart}{Part}

\allowdisplaybreaks

\renewcommand{\qed}{\nobreak \ifvmode \relax \else
      \ifdim\lastskip<1.5em \hskip-\lastskip
      \hskip1.5em plus0em minus0.5em \fi \nobreak
      \vrule height0.75em width0.5em depth0.25em\fi}

\renewcommand{\leq}{\leqslant}
\renewcommand{\geq}{\geqslant}

\newcommand{\Ot}{\ensuremath{\widetilde{O}}}
\newcommand{\eps}{\ensuremath{\varepsilon}}

\newcommand{\bracket}[1]{\left[#1\right]}
\newcommand{\paren}[1]{\ensuremath{\left(#1\right)}\xspace}
\newcommand{\card}[1]{\left\vert{#1}\right\vert}

\newcommand{\abs}[1]{\ensuremath{|#1|}}

\newcommand{\prob}[1]{\Pr\paren{#1}}
\newcommand{\expect}[1]{\Exp\bracket{#1}}

\newcommand{\set}[1]{\ensuremath{\left\{ #1 \right\}}}
\newcommand{\poly}{\mbox{\rm poly}}

\DeclareMathOperator*{\Exp}{\ensuremath{{\mathbb{E}}}}
\DeclareMathOperator*{\Prob}{\ensuremath{\textnormal{Pr}}}
\renewcommand{\Pr}{\Prob}

\newcommand{\II}{\ensuremath{\mathbb{I}}}

\newcommand{\mireal}[1][]{
  \ifx\relax#1\relax%
    \II(\mione \,; \mitwo)%
  \else%
    \II(\mione \,; \mitwo\mid #1)%
  \fi
}

\newenvironment{tbox}{\begin{tcolorbox}[
		enlarge top by=5pt,
		enlarge bottom by=5pt,
		 breakable,
		 boxsep=0pt,
                  left=4pt,
                  right=4pt,
                  top=10pt,
                  arc=0pt,
                  boxrule=1pt,toprule=1pt,
                  colback=white
                  ]
	}
{\end{tcolorbox}}

\title{Fully Dynamic Algorithms for Coloring Triangle-Free Graphs}
\author{Sepehr Assadi\footnote{(sepehr@assadi.info) School of Computer Science, University of Waterloo. 
Supported in part by an NSERC
Discovery Grant and a Faculty of Math Research Chair grant. \smallskip} \\ {\small University of Waterloo} \and 
\and Helia Yazdanyar\footnote{(hyazdanyar@uwaterloo.ca)  School of Computer Science, University of Waterloo. Supported in part by a Cheriton Scholarship from School of Computer Science and SA's NSERC Discovery Grant.} \\ {\small University of Waterloo}
}

\date{}

\begin{document}

	\maketitle
	
	\pagenumbering{roman}
	

\begin{abstract}

\bigskip

A celebrated result of Johansson in graph theory states that every \textbf{triangle-free} graph of maximum degree $\Delta$ can be properly colored with $O(\Delta/\ln\Delta)$ colors, improving upon
the ``greedy bound'' of $\Delta+1$ coloring in general graphs. This coloring can also be found in polynomial time. 

\medskip

We present an algorithm for maintaining an $O(\Delta/\ln\Delta)$ coloring of a dynamically changing triangle-free graph that undergoes edge insertions and deletions. The algorithm is 
randomized and on $n$-vertex graphs has amortized update time of $\Delta^{o(1)}\log{n}$ per update with high probability, even against an adaptive adversary. 

\medskip

A key to the analysis of our algorithm is an application of the \textbf{entropy compression} method that to our knowledge is new in the context of dynamic algorithms. 
This technique appears general and is likely to find other applications in dynamic problems and thus can be of its own independent interest. 

\end{abstract}

	\bigskip
	
	\setcounter{tocdepth}{3}
	\tableofcontents
	
	\clearpage
	
		\setcounter{page}{1}
	\pagenumbering{arabic} 
	

\section{Introduction}\label{sec:intro}

Graph coloring is one of the most studied topics in graph theory with a wide range of applications in computer science. Let $G=(V,E)$ be an $n$-vertex undirected graph with maximum degree $\Delta$. 
For any integer $c \geq 1$, a (proper) $c$-coloring of $G$ is an assignment of colors from $\set{1,\ldots,c}$ to the vertices of the graph so that no edge is monochromatic. 
It is easy to see that every graph admits a $(\Delta+1)$ coloring which can be found via a simple greedy algorithm in linear time: color the vertices one by one in an arbitrary order and for each one, choose a color not assigned to any of its already colored
neighbors, which always exist by the pigeonhole principle. 

Nevertheless, one can almost always find a coloring with fewer colors. Brooks' theorem~\cite{Brooks41} states that the only (connected) graphs that do need $\Delta+1$ colors are
$(\Delta+1)$-cliques (and odd cycles only when $\Delta=2$); all other graphs can be colored with $\Delta$ colors. As already pointed out by Vizing in~\cite{Vizing68}, this result is just the tip of the iceberg and there is a vast body of work in graph theory with the general theme of 
coloring graphs with ``fewer'' than $\Delta$ colors based on how ``far'' they get from being a clique; see, e.g.,~\cite{Kim95,Johansson96,Reed98,Reed99,AlonKS99,MolloyR02,MolloyR14,BansalGG15,Molloy18,BonamyKNP22}.  A central result here is a seminal theorem of Johansson~\cite{Johansson96} that proves that triangle-free graphs are $O(\Delta/\ln\Delta)$ colorable. This bound is asymptotically optimal and has been since sharpened to $(1+o(1))\Delta/\ln{\Delta}$ 
colors by Molloy~\cite{Molloy18} (see also~\cite{Jamall11,PettieS13,Bernshteyn19,AchlioptasIS19}).

We study coloring of triangle-free graphs in the dynamic setting, wherein the graph undergoes edge insertions and deletions and we want to maintain a proper coloring after every update.
There has been extensive interest in 
 dynamic algorithms for graph coloring problems and specifically $(\Delta+1)$ coloring~\cite{BhattacharyaCHN18,BhattacharyaGKL22,HenzingerP22,BehnezhadRW25,FlinH25}. The state of the art for $(\Delta+1)$ coloring are randomized algorithms with 
 amortized expected update time of $O(1)$ against oblivious adversaries~\cite{BhattacharyaGKL22,HenzingerP22} and $\Ot(n^{2/3})$ against adaptive adversaries~\cite{FlinH25}\footnote{An oblivious adversary fixes the set of updates to the graph 
before presenting them to the algorithm, whereas an adaptive adversary may present each update to the graph based on the output of the algorithm on previous updates; see~\cite{BeimelKMNSS22,BernsteinBKS25,BehnezhadRW25} for more discussion
on the different types of adversary.}. Other variants of dynamic graph coloring have also been studied including arboricity-dependent coloring~\cite{ChristiansenR22,ChristiansenNR23} and edge coloring~\cite{DuanHZ19,ChristiansenRV24,BhattacharyaCPS24}. However, 
to our knowledge, there have been no prior work on dynamic algorithms for coloring triangle-free graphs. This is in contrast to related models such as static~\cite{Jamall11,Molloy18,AchlioptasIS19}, parallel~\cite{GrableP98}, distributed~\cite{PettieS13}, streaming~\cite{AlonA20} and sublinear time~\cite{AlonA20} algorithms. 
We address this gap in our paper. 

Before getting to our results, let us mention what existing static algorithms can imply for the problem in dynamic graphs. 
Previously,~\cite{Jamall11,Molloy18} provided $\Ot(n\Delta^2)$ time algorithms\footnote{Throughout, we use the standard notation $\Ot(f) := O(f \cdot \poly\log{f})$ to suppress logarithmic factors.} for $O(\Delta/\ln{\Delta})$ coloring of triangle-free \emph{static} graphs. By running this algorithm after each $\Theta(\Delta/\ln{\Delta})$ updates and using new colors in the meantime to keep a proper coloring, this implies a dynamic algorithm with $\Ot(n\Delta)$ update time. Using the sublinear time algorithm of~\cite{AlonA20} in place of the static algorithms can further reduce this to $\min{(n^{1+o(1)},n^{2+o(1)}/\Delta^2)}$ update time. 
It is worth comparing this state of affairs with the trivial dynamic algorithm for $(\Delta+1)$ coloring that recolors at most one vertex per update, by spending $O(\Delta)$ time to iterate over its neighbors and find an available color. 
This strategy however does not work for $O(\Delta/\ln\Delta)$ coloring of triangle-free graphs given there is no reason why one can greedily extend such a coloring to a new vertex. 


\subsection{Our Contributions}\label{sec:results}

We present a randomized algorithm for maintaining an $O(\Delta/\ln \Delta)$ coloring of triangle-free graphs in $\Delta^{o(1)}\log{n}$ amortized update time. 

\begin{result}\label{res:main}
	There is an algorithm that for any {constant} $\gamma \in (0,1)$, 
	maintains an $O_{\gamma}(\Delta/\ln{\Delta})$ coloring of any $n$-vertex fully dynamic triangle-free graph  in $O(\Delta^{\gamma}\log{n})$ amortized update time. 
	The algorithm is randomized and works with high probability against an adaptive adversary (against an oblivious adversary, the update time improves to $O(\Delta^{\gamma})$ time). 
\end{result}
\Cref{res:main} provides the first non-trivial dynamic algorithm for coloring triangle-free graphs, and goes way below the benchmark of re-running static algorithms after a fixed number of updates. Indeed, by letting $\gamma \rightarrow 0$, 
we can reduce the update time to $\Delta^{o(1)}\log{n}$ time while maintaining an $O(\Delta/\ln{\Delta})$ coloring of the graph. It is worth contrasting this bound with the state-of-the-art bound of $\Ot(n^{2/3})$ amortized update time
for $(\Delta+1)$ coloring of arbitrary graphs. 

A key component of our main algorithm in~\Cref{res:main} is a dynamic algorithm for recoloring the graph after each update analogous to the trivial $O(\Delta)$ update time algorithm for $(\Delta+1)$ coloring. Unlike greedy coloring, 
our algorithm is based on a local search strategy, that \emph{on average}, requires spending $O(\Delta^2\log{n})$ time per update. 

\begin{result}\label{res:Delta}
	For any constant $\eps \in (0,1)$, there is an algorithm for maintaining a $(1+\eps)\Delta/\ln{\Delta}$ coloring of $n$-vertex fully dynamic triangle-free graphs with maximum degree $\Delta$ (sufficiently large as a function of $\eps$). 
	The algorithm is randomized and with high probability has amortized update time of $O(\Delta^2\log{n})$ even against adaptive adversaries (against an oblivious adversary, the update time improves to $O(\Delta^{2})$ time). 
\end{result}

\Cref{res:Delta} can be seen as a \emph{dynamic} version of the Molloy's breakthrough~\cite{Molloy18} that proves triangle-free graphs are $(1+o(1))\Delta/\ln{\Delta}$ colorable, sharpening the leading constants 
in the long line of work on this problem~\cite{Johansson96,MolloyR02,Jamall11,PettieS13,BansalGG15}. Our algorithm is inspired by Molloy's proof of using the \textbf{entropy compression} method,
originally introduced by Moser~\cite{Moser09} and Moser and Tardos~\cite{MoserT10} in their seminal work on algorithmic Lovasz Local Lemma. 

\paragraph{Our techniques.} Employing Molloy's entropy compression directly only leads to a dynamic algorithm
with $O(n\Delta)$ update time (see~\cite{Molloy18}). However, we show how to make the entropy compression \emph{dynamic}, which to our knowledge, is new in this context (for readers familiar with this technique, we achieve this by considering the interaction of the randomized dynamic algorithm and adversary \emph{jointly} as a compression algorithm, logging the ``standard'' parts of the technique combined with dynamic changes to the graph as part of the compression, and design a recovery algorithm that given this log can recover the original random bits despite the dynamic changes to the graph; see~\Cref{sec:alg-partial-color} for more details). 
Beside this, Molloy uses entropy compression to reduce the problem to a list-coloring problem of Reed~\cite{Reed99c} (see also~\cite{ReedS02}). For our purpose, this reduction is not sufficient as we do not know how to solve Reed's problem dynamically. Instead, we use (dynamic) entropy compression to reduce the problem to the $(\Delta+1)$-coloring problem and then can use the $O(\Delta)$ update time dynamic algorithm for the problem. 

To prove~\Cref{res:main} from~\Cref{res:Delta}, we borrow an idea originally due to~\cite{BhattacharyaCHN18} for $(1+o(1))\Delta$ coloring of arbitrary dynamic graphs. The general approach is to maintain a hierarchy of vertex partitions of the graph, and ensure that at each level of the hierarchy, each vertex belongs to a partition with ``few'' other neighbors in its own part (roughly speaking, we would like to \emph{dynamically} partition vertices into $b$ groups in each level, so each vertex $v$ have $\approx \deg(v)/b$ neighbors in its group). We implement this by designing a family of recursive algorithms with improved update times whose base case is our algorithm in~\Cref{res:Delta}. This recursive approach, as opposed
to the ``iterative'' argument of~\cite{BhattacharyaCHN18}, allows us to better control the parameters involved and extend the argument to $O(\Delta/\ln{\Delta})$ coloring, which in particular, is a \emph{non-linear} function of vertex degrees. 

\paragraph{Perspective: dynamic entropy compression.} Entropy compression, as argued by~\cite{Tao09}---who also coined this term for the techniques of~\cite{Moser09,MoserT10}---is a unique type of \emph{potential function} argument. Suppose we want a coloring of a graph that satisfy a certain ``local'' constraint, e.g., the constraint of each vertex depends on colors of its neighbors. Then, we start with an arbitrary coloring of the graph, and then as long as we find a vertex $v$ that violates its constraint, we \emph{randomly} and \emph{locally} change the colors of vertices around it. Suppose we could even say that ``with large probability'' this random change
fixes the constraint for the vertex $v$. Can we also argue that repeating this process enough number of times fixes \emph{all} constraints? The challenge is that fixing constraint of a single vertex may lead to many other constraints for other vertices to be violated now. 
Entropy compression is precisely a method to keep track of the ``progress'' made by this algorithm in a potential-function type argument. At its core, it argues that if we can run this process for a large number of times, we will be able to compress a 
truly random string to a summary of much smaller length, which is information-theoretically impossible.

Potential functions appear quite frequently in the analysis of dynamic graph algorithms. Yet, to our knowledge, entropy compression is curiously omitted from the standard toolkit in this literature. 
Our work shows that this technique can be naturally extended to analyze dynamic graph algorithms as well. We especially find this interesting as it allows for analyzing \emph{randomized} algorithms against \emph{adaptive} adversaries
in a new way. Most adaptive dynamic algorithms use randomization to run \emph{fast sublinear time} algorithms once every fixed number of updates (this allows them to address the correlation between randomness 
and adaptive adversary's update by using \emph{fresh} randomness in the analysis); see, e.g., the adaptive dynamic algorithms for $(\Delta+1)$ coloring~\cite{BehnezhadRW25,FlinH25}. 
In contrast, dynamic entropy compression allows us to \emph{continuously} use randomness in the algorithm and still analyze its performance against an adaptive adversary, similar to how
one typically does with a potential function in a deterministic algorithm (the only other technique of this nature we are aware
of is \emph{proactive resampling} introduced in~\cite{BernsteinBGNSS022,BhattacharyaSS22} although the similarities here are more conceptual than technical). We believe that this viewpoint is an important (non-technical) contribution of our paper as it may pave the way for obtaining more eﬃcient dynamic algorithms for other fundamental graph problems as well. 
 
\paragraph{Independent and concurrent works.} Parallel to us,~\cite{HaeuplerP26} further developed the proactive resampling technique of~\cite{BernsteinBGNSS022,BhattacharyaSS22}. Among many other interesting results, 
they also presented a fully dynamic $O(\Delta/\ln{\Delta})$ coloring algorithm for triangle free graphs with update time of $n^{1/2+o(1)}$ by ``dynamizing'' the sublinear time algorithm of~\cite{AlonA20}. Technique-wise, our work and~\cite{HaeuplerP26}
appear to be entirely disjoint. 

More relatedly,~\cite{HaeuplerMRST26} also observed that entropy compression gives a recipe for designing dynamic algorithms and used this to prove a wide class of local search algorithms extend as is to the fully dynamic setting. 
As a corollary of their techniques, they also obtained an entropy compression based 
dynamic algorithm for coloring triangle-free graphs similar to our~\Cref{res:Delta}. 
Their results are quite more general than ours in scope and span various applications of algorithmic Lovasz Local Lemma and beyond. But, for the specific problem at hand, their general framework gives an $\Ot(\Delta^3)$ update time algorithm for $6\Delta/\ln{\Delta}$ coloring of triangle-free graphs, which is quantitatively weaker than~\Cref{res:Delta} (the extension
of our results to $\Delta^{o(1)}\log{n}$ update time in~\Cref{res:main} is using a separate set of techniques that is missing from~\cite{HaeuplerMRST26}). At a technical-level, our dynamic entropy compression works by
``logging dynamic updates'' whereas~\cite{HaeuplerMRST26} do a union bound over all possible choices of adversary, which are different perspectives on the same general idea.



\section{Preliminaries}\label{sec:prelim}

\paragraph{Notation.} For any integer $b \geq 1$, define $[b] := \set{1,\cdots,b}$. 
For a graph $G=(V,E)$, we use $N(v)$ to denote the neighbors of $v \in V$, $\deg(v)=\card{N(v)}$ to denote the degree of $v$, and $B(v,r)$ to denote the ball of radius $r$ around $v$, i.e., all vertices that are at distance at most $r$ from $v$. 
A \textbf{partial coloring} of a graph $G=(V,E)$ is a function
$\varphi : V \to \mathcal{C} \cup \{\perp\}$, where $\mathcal{C}$ is a set of colors and
$\perp$ denotes an uncolored vertex, also referred to as a \emph{free} vertex.
A partial coloring $\varphi_t$ is said to be \textbf{proper} if no edge is monochromatic, however vertices assigned color $\perp$ are exempt, namely, are allowed to be adjacent to each other.

\paragraph{Dynamic graphs.} We denote the static vertex set of the underlying $n$-vertex graph by $V := [n]$.  
Let $\mathcal{G} = (G_0, G_1,\cdots, G_t, \cdots)$ be the sequence of dynamic graphs presented to the algorithm: the initial graph $G_0$ is empty and each graph $G_t$ is obtained from the previous graph $G_{t-1}$ by inserting 
or deleting a single edge $e_t = (u_t,v_t)$. We use $G_t := (V,E_t)$ to denote the graph at the $t$-th step; $N_t(v)$, $\deg_t(v)$, and $B_t(v,r)$ for any $v \in V$ and $r \geq 0$ are defined analogously with respect to $G_t$. 

As is common in this context, we assume $\Delta$ is known a priori and every vertex has degree at most $\Delta$ throughout the updates. 
We also assume the number of updates is $\Omega(n)$ as otherwise, there will be vertices that do not even receive a single update and are hence redundant\footnote{This is relevant when
we would like to achieve a \emph{`with high probability'} guarantee for some aspect of the algorithm (say, runtime). If the updates are much smaller than $n$, then one naturally cannot expect a probability bound based on $n$ to hold (think of padding an
arbitrary graph with ``near infinitely many'' vertices).}. 

An \textbf{oblivious} adversary fixes the choices of graphs in $\mathcal{G}$ before presenting them to the algorithm whereas an \textbf{adaptive} adversary can choose each graph $G_t$ based on the output of the algorithm
on prior graphs $G_1,\ldots,G_{t-1}$ (in our case, we assume even the prior random coin tosses of the algorithm are available to the adversary). 
This adversary can be modeled as a game between the dynamic algorithm and an oracle $\mathcal{O}_A$ that at each time step $t$ receives the random coins used along with the current coloring and outputs the next update.

Finally, the \textbf{recourse} of a dynamic graph coloring algorithm is the worst-case number of vertices whose colors have changed by the algorithm after each single update.

\paragraph{Balls into bins.} We use the following variant of balls and bins experiments. Consider a set of $d$ balls and $b$ bins $\mathcal{B} = \{B_1, \ldots, B_b\}$. For each ball $i \in [d]$, we are given a subset of bins $S_i \subseteq \mathcal{B}$. Each ball $i$ is placed into one of the bins in $S_i$ uniformly at random.

\begin{proposition}[\!\!\cite{PS97,Molloy18}]\label{lem:concentration-ineq-negative-associated}
   For $i \in [b]$, let $X_i$ be the indicator random variable that bin $B_i$ is non-empty in the balls-into-bins experiment above, and define $X = \sum_{i=1}^b X_i$. Then, for any $0 < t \leq \expect{X}$, we have:
    \[
    \prob{X<\expect{X}-t}< \exp\left(-\frac{t^2}{2\expect{X}}\right).
    \]
\end{proposition}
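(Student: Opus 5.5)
This is a lower-tail bound for a sum of indicators, and I would prove it by establishing \emph{negative association} of the indicators $X_1,\ldots,X_b$ and then running the standard Chernoff argument for the lower tail, which only needs the moment generating function to factor favorably.

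\textbf{Step 1: negative association.} For each ball $i\in[d]$ and bin $j\in[b]$, let $Y_{ij}$ indicate that ball $i$ lands in bin $B_j$. For a fixed $i$, the vector $(Y_{ij})_{j\in[b]}$ is a zero-one vector with exactly one $1$, placed uniformly on $S_i$. Such a vector is negatively associated; this is the classical zero-one lemma of Dubhashi--Ranjan, where the bins outside $S_i$ are simply constant zero. The balls are independent, so the union $\{Y_{ij}\}_{i,j}$ is negatively associated. Now write $X_j=\mathbf{1}[\sum_i Y_{ij}\geq 1]$. Each $X_j$ is a nondecreasing function of the disjoint block $\{Y_{ij}\}_i$, and negative association is preserved under monotone functions of disjoint coordinate sets. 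Hence $X_1,\ldots,X_b$ are negatively associated.

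\textbf{Step 2: MGF bound.} Fix $\lambda>0$. The functions $e^{-\lambda X_j}$ are all nonincreasing functions of disjoint variables, so iterating the negative association inequality gives
\[
\Exp\big[e^{-\lambda X}\big]\leq\prod_j \Exp\big[e^{-\lambda X_j}\big].
\]
This is the same bound we would have if the $X_j$ were independent Bernoulli variables with means $p_j$.

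\textbf{Step 3: Chernoff.} Set $\mu=\Exp[X]$. Using $1-p(1-e^{-\lambda})\leq \exp(-p(1-e^{-\lambda}))$, Markov's inequality gives
\[
\Pr[X<\mu-t]\leq \exp\big(\lambda(\mu-t)-\mu(1-e^{-\lambda})\big).
\]
Choose $\lambda=-\ln(1-\delta)$ with $\delta=t/\mu\in(0,1]$; the endpoint $\delta=1$ is handled by a limit or directly. This yields the usual bound $\big(e^{-\delta}/(1-\delta)^{1-\delta}\big)^{\mu}\leq e^{-\delta^2\mu/2}=\exp(-t^2/(2\mu))$. The inequality $-\delta-(1-\delta)\ln(1-\delta)\leq-\delta^2/2$ follows from the Taylor expansion of $(1-\delta)\ln(1-\delta)$.

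\textbf{Main obstacle.} The delicate step is Step 1, namely that the $X_j$, which are non-linear functions, inherit negative association. The key observation that resolves it is that each $X_j$ depends only on column $j$ of the $Y$ matrix and is monotone in it. The Chernoff computation in Step 3 is routine, and strictness of the inequality can be obtained from strict convexity.
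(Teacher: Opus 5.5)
Your proposal is correct and follows essentially the same route as the paper: show the bin indicators are negatively associated because they are monotone functions of disjoint blocks of the ball placements, then apply a Chernoff-type lower-tail bound for negatively associated indicators (the paper cites this bound from Molloy's Lemma 3b, while you derive it directly from the product bound on the moment generating function). One small advantage of your version is that you get negative association from the zero-one lemma for each ball, plus closure under independent unions and disjoint monotone functions, so it correctly covers ball-dependent sets $S_i$, where the occupancy vector $(N_i)$ is not literally multinomial as the paper's sketch asserts.
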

\begin{proof}[Proof sketch]
     For each $i \in [b]$, let $Y_i = 1 - X_i$ and let $N_i$ be the random variable denoting the number of balls placed in bin $B_i$. As $(N_i:i\in [b])$ has multinomial distribution, the collection $\{N_i\}_{i\in [b]}$ is negatively associated (NA)~\cite[Example~3.1]{joagProschan1983}.

    Note that $Y_i$ is an indicator variable that is equal to $1$ if $N_i= 0$ and $0$ otherwise. By the properties of NA~\cite{joagProschan1983}, any non-increasing functions of NA variables is also NA. Therefore, $\{Y_i\}_{i \in [q]}$ is NA. Applying the concentration bound for NA variables from~\cite[Lemma 3b]{Molloy18} yields the desired bound. 
\end{proof}

\paragraph{Concentration Inequality.}
We use the following standard version of Chernoff bound. 
\begin{proposition}[Chernoff-Hoeffding bound; cf.~\cite{MolloyR02}]\label{prop:chernoff}
	Let $X_1,\ldots,X_n$ be $n$ independent indicator random variables. Define $X:= \sum_{i=1}^{n} X_i$. 
	For any $\delta \in (0,1)$, and $\expect{X} \leq \mu$ we have: 
	\begin{align*}
		\Pr\paren{X > (1+\delta) \cdot \mu} \leq \exp\paren{-\frac{\delta^2 \cdot\mu}{3}}.
	\end{align*} 
\end{proposition}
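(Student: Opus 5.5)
This is the standard multiplicative Chernoff upper tail; I will derive it with the exponential moment (Chernoff) method. The argument runs in four steps.

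\textbf{Step 1: reduce to the case $\expect{X}=\mu$.} The hypothesis only gives $\expect{X}\leq\mu$. Write $p_i := \expect{X_i}$. For each $i$, pick $q_i \geq p_i$ with $q_i \leq 1$ and $\sum_i q_i = \mu$. If $\mu > n$ no such choice exists, but in that case $X \leq n < (1+\delta)\mu$ and the probability is $0$. Couple each $X_i$ with an independent indicator $X'_i \geq X_i$ having $\expect{X'_i} = q_i$: set $X'_i = 1$ whenever $X_i = 1$, and otherwise set $X'_i = 1$ with the appropriate conditional probability. Then $X' := \sum_i X'_i$ stochastically dominates $X$ and $\expect{X'} = \mu$. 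It therefore suffices to prove the bound assuming $\expect{X} = \mu$ exactly.

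\textbf{Step 2: bound the moment generating function.} For $\lambda > 0$, Markov's inequality gives
\[
\Pr\paren{X > (1+\delta)\mu} \leq e^{-\lambda(1+\delta)\mu}\,\expect{e^{\lambda X}}.
\]
By independence and $1+x \leq e^x$,
\[
\expect{e^{\lambda X}} = \prod_i \paren{1 + p_i(e^\lambda - 1)} \leq \exp\paren{\mu(e^\lambda - 1)}.
\]

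\textbf{Step 3: optimize $\lambda$.} Choosing $\lambda = \ln(1+\delta)$ yields the classical bound
\[
\Pr\paren{X > (1+\delta)\mu} \leq \paren{\frac{e^\delta}{(1+\delta)^{1+\delta}}}^{\mu}.
\]

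\textbf{Step 4: the elementary inequality.} This is the only step needing care. We must show that for $\delta \in (0,1)$,
\[
f(\delta) := \delta - (1+\delta)\ln(1+\delta) + \frac{\delta^2}{3} \leq 0.
\]
We have $f(0) = 0$. Differentiating, $f'(\delta) = -\ln(1+\delta) + \tfrac{2\delta}{3}$, so $f'(0) = 0$. Differentiating again, $f''(\delta) = -\tfrac{1}{1+\delta} + \tfrac{2}{3}$, which is negative for $\delta < 1/2$ and positive afterwards. Hence $f'$ first decreases from $0$ and then increases on $(0,1)$. At the right endpoint, $f'(1) = -\ln 2 + 2/3 < 0$. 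So $f' \leq 0$ throughout $[0,1]$, which gives $f \leq 0$ on $[0,1]$. Substituting into Step 3 gives $\exp\paren{-\delta^2\mu/3}$, as claimed.
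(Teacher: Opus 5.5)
Your proof is correct. The paper gives no proof of its own and simply cites this standard bound to Molloy--Reed; your exponential-moment derivation is the textbook argument behind that citation.

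One simplification: Step 1 is unnecessary. Since $e^\lambda - 1 > 0$, the bound $\prod_i \bigl(1+p_i(e^\lambda-1)\bigr) \le \exp\bigl(\mu(e^\lambda-1)\bigr)$ already follows directly from $\sum_i p_i \le \mu$.
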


\paragraph{Entropy compression.}
The following lemma states that we cannot compress random bits efficiently, which forms the key concept behind the entropy compression method. 
This is a pretty standard result and we provide its elementary proof for completeness.

\begin{lemma}\label{lem:compression}
    Let $\delta \in (0,1)$ and $f : \{0,1\}^m \rightarrow \{0,1\}^t$ be any fixed function with the following property: if we
    sample $x$ uniformly at random from $\{0,1\}^m$, we can recover $x$ from $f(x)$ with probability at least $\delta$. Then, we should have 
    \[
    t\geq m - \log(1/\delta).
    \]
\end{lemma}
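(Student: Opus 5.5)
The plan is a simple counting argument. The phrase ``we can recover $x$ from $f(x)$'' is formalized via a fixed decoding function $g : \{0,1\}^t \rightarrow \{0,1\}^m$. Let
\[
S := \set{x \in \{0,1\}^m : g(f(x)) = x}
\]
be the set of \emph{recoverable} strings. By hypothesis, a uniformly random $x$ lies in $S$ with probability at least $\delta$, so
\[
\card{S} \geq \delta \cdot 2^m .
\]

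The key step is to observe that $f$ restricted to $S$ is injective. Indeed, if $x, x' \in S$ and $f(x) = f(x')$, then
\[
x = g(f(x)) = g(f(x')) = x' .
\]
Hence $f$ maps $S$ injectively into $\{0,1\}^t$, and therefore $\card{S} \leq 2^t$.

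Combining the two bounds gives $\delta \cdot 2^m \leq 2^t$. Taking base-$2$ logarithms yields $t \geq m + \log\delta = m - \log(1/\delta)$, as claimed.

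There is no real obstacle. The only point needing care is that the recovery procedure must be a fixed deterministic map, so that the set $S$ is well-defined. If one wanted to allow the recovery to vary with the input, it would no longer be a function of $f(x)$ alone, so the formalization via a single $g$ is the natural reading of the statement.
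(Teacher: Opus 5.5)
Your proof is correct and follows the paper's argument: the set of recoverable strings has size at least $\delta \cdot 2^m$, and $f$ is injective on it, so it has size at most $2^t$. Introducing an explicit decoder $g$ states the injectivity step a bit more precisely than the paper does, but the proof is the same.
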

\begin{proof}
    Let $X \subseteq \{0, 1\}^m$ be the set of strings that are correctly recovered by $f$. For any two distinct $x, y \in X$, it must be that $f(x) \neq f(y)$; otherwise, we cannot uniquely recover $x$ from function $f$. Thus, $f:X\rightarrow \{0, 1\}^t$ must be injective, which implies that $\abs{X}\leq 2^t$.
    
    We also know that $x\in \{0, 1\}^m$  can be recovered from $f(x)$ with probability at least $\delta$, therefore $\abs{X}\geq \delta\cdot 2^m$. Hence, 
    \[
    2^t\geq \abs{X} \geq \delta\cdot 2^m, 
    \]
    resulting in the desired $t\geq m - \log(1/\delta)$, concluding the proof.
\end{proof}


\newcommand{\colort}[1]{\ensuremath{\textnormal{\texttt{color}}_t(#1)}}
\newcommand{\availt}[1]{\ensuremath{\textnormal{\texttt{Avail}}_t(#1)}}
\newcommand{\freet}[1]{\ensuremath{\textnormal{\texttt{Free}}_t(#1)}}

\newcommand{\CC}{\ensuremath{\mathcal{C}}}

\newcommand{\update}{\ensuremath{\textnormal{\textsc{Update}}}\xspace}
\newcommand{\pcolort}{\ensuremath{\varphi_t}\xspace}

\newcommand{\partialcolor}{\ensuremath{\textnormal{\textsc{PartialColor}}}\xspace}

\newcommand{\fix}{\ensuremath{\textnormal{\textsc{Fix}}}\xspace}
\newcommand{\fixA}{\ensuremath{\textnormal{\textsc{Fix}}_{\texttt{Avail}}}\xspace}
\newcommand{\fixF}{\ensuremath{\textnormal{\textsc{Fix}}_{\texttt{Free}}}\xspace}
\newcommand{\resample}{\ensuremath{\textnormal{\textsc{Resample}}}\xspace}
\newcommand{\verify}{\ensuremath{\textnormal{\textsc{Verify}}}\xspace}

\section{An $O(\Delta^2\log{n})$ Update Time Algorithm}\label{sec:poly-Delta}

We now state our main technical contribution, formalizing~\Cref{res:Delta}. 

\begin{theorem}[Formalizing~\Cref{res:Delta}]\label{thm:poly-Delta}
	Let $\eps \in (0,1)$ be a fixed constant and $\Delta_0$ be sufficiently large as a function of $\eps$. There is an algorithm that maintains a 
	$
		(1+\eps) \cdot \Delta / \ln{\Delta}
	$
	coloring of any fully dynamic $n$-vertex triangle-free graph of maximum degree $\Delta \geq \Delta_0$. The algorithm is randomized and with high probability has amortized update time of $O(\Delta^2\log{n})$ against adaptive adversaries and $O(\Delta^2)$ against oblivious adversaries.  
\end{theorem}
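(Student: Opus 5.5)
The plan is to follow Molloy's proof, which splits into two phases, and to make each phase dynamic. Fix $q := (1+\eps)\Delta/\ln\Delta$ colors $\CC$. We maintain a proper partial coloring $\pcolort$ whose free vertices induce an ``easy'' residual instance. Molloy's invariants are the following: for every vertex $v$, $\card{\availt{v}} \geq L := \Delta^{\eps/2}$ (roughly), where $\availt{v}$ is the set of colors not used on $N_t(v)$; and for every vertex $v$, the free neighbors satisfy a weighted bound, namely that $\sum_{c}$ of the number of free neighbors $u$ of $v$ with $c \in \availt{u}$ is at most $L/10$. Instead of Reed's list-coloring lemma, we strengthen the second invariant in one of two ways. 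One option is to require $\card{\freet{v}} < \card{\availt{v}}$ for each free vertex $v$. The other is to give free vertices a fresh palette of $O(\Delta^{1-\eps/4})$ extra colors, chosen so that free vertices have free-degree $o(\Delta/\ln\Delta)$. With either choice the free vertices can be colored by the trivial dynamic $(\Delta'+1)$-coloring greedy algorithm in $O(\Delta)$ time per change. I would take the second option: require that each vertex has at most $\Delta^{1-\eps/4}$ free neighbors, and color free vertices greedily from a disjoint palette of size $\Delta^{1-\eps/4}+1 = o(\Delta/\ln\Delta)$. This keeps the total number of colors at $(1+\eps)\Delta/\ln\Delta$ after rescaling $\eps$.

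The dynamic algorithm works as follows. After an update $e_t=(u_t,v_t)$, if the endpoints share a color, uncolor one of them. Then, while some vertex $v$ violates an invariant (tracked with counters, so violations are detected in $O(\Delta)$ time per recolored vertex), call $\resample(v)$. This operation recolors every vertex in $N_t(v)$ (or in $B_t(v,1)$) independently and uniformly from $\CC$ using fresh random bits, and then uncolors any vertex in conflict. Each resample costs $O(\Delta^2)$ to update the counters of the distance-2 neighborhood. The correctness of the static analysis rests on Molloy's key lemma, which uses triangle-freeness to make $N(v)$ an independent set. The lemma says that a single uniform resample of $N(v)$ makes $v$ satisfy its invariants except with probability $\Delta^{-\Omega(\log\Delta)}$, or at least $\Delta^{-10}$. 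This lemma follows from the balls-into-bins bound of \Cref{lem:concentration-ineq-negative-associated}, which shows that the number of distinct colors on $N(v)$ is concentrated so that $\availt{v}$ is large, together with \Cref{prop:chernoff} for the free-neighbor counts. Consequently, each violating configuration of $N(v)$ lies in a set of at most $q^{\deg(v)}\Delta^{-10}$ colorings.

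The heart of the proof, and the main obstacle, is to bound the total number $R$ of resamples over $T$ updates by $O(T + \log n \cdot \text{something})$, against an adaptive adversary, via dynamic entropy compression with \Cref{lem:compression}. The plan is to treat algorithm and adversary jointly: the input is the random string of $m = R\cdot\Delta\log q$ bits consumed by resamples, and the adversary oracle $\mathcal{O}_A$ is a deterministic function of these bits (fixing its own coins). The log records the sequence of events. For each update it records a marker (cheap, $O(1)$ bits, since the adversary is recomputable during decoding from the recovered prefix). For each resample it records which vertex among the $O(\Delta^2)$ candidates near the previous change violated its invariant, costing $O(\log\Delta)$ bits. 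Queue structure is handled through a stack-like encoding, as in Moser--Tardos. Finally, the log records the final coloring ($n\log q$ bits) and, for each resample, an index of the pre-resample configuration of $N(v)$ within the violating set, costing $\deg(v)\log q - 10\log\Delta$ bits. Decoding runs backwards. From the final coloring and the log, we undo resamples in reverse and replay the adversary's updates forward to know the graph $G_t$ at each step; this forward replay is the subtle point. I would therefore do the decoding in two passes: first simulate forward to reconstruct the graph sequence, which is possible because updates are functions of earlier bits that we recover in order, perhaps by running the recovery incrementally, epoch by epoch. Each resample then saves $\Omega(\log\Delta)$ bits net, while each update costs $O(\log n)$ at worst. \Cref{lem:compression} with $\delta = n^{-c}$ gives $R = O(T\log n/\log\Delta + n)$ w.h.p., and we use $T=\Omega(n)$. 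Total time is $O(\Delta^2)$ per resample plus $O(\Delta)$ for greedy fixes, giving $O(\Delta^2\log n)$ amortized. Against an oblivious adversary the updates need not be encoded with $\log n$ bits, only as markers, which yields $O(\Delta^2)$.
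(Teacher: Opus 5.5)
Your overall architecture matches the paper's: a partial coloring from $q$ colors with an availability invariant and a free-degree invariant, local resampling of $N_t(v)$, entropy compression with the updates logged explicitly, and greedy coloring of free vertices from a small disjoint palette. Your threshold $\Delta^{1-\eps/4}$ with an extra palette of size $\Delta^{1-\eps/4}+1$ is a harmless variant of the paper's $\eps q/2$ with $\card{\CC_2}=\eps q$. Logging each update with $O(\log n)$ bits is also what is actually needed: decoding runs backwards from the final coloring, so the adversary cannot be replayed from a recovered prefix.

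The genuine gap is the resampling rule. You recolor each $z\in N_t(v)$ uniformly from all of $\CC$ and then uncolor conflicts. Under this rule, $z$ ends up free with probability $1-a_t(z)/q$. The invariant only guarantees $a_t(z)\geq \Delta^{\eps/2}$, and nothing keeps $a_t(z)$ close to $q$. For instance, an adaptive adversary can easily reach a state where $a_t(z)\leq q/2$ for every $z\in N_t(v)$ and $\deg_t(v)=\Delta$. Then each neighbor becomes free independently with probability at least $1/2$, so $f_t(v)\geq \Delta/4\gg\Delta^{1-\eps/4}$ with probability $1-e^{-\Omega(\Delta)}$.

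This breaks the argument in two places. First, the free-degree part of your key lemma is false, not merely weak: the mean itself is too large, so Chernoff cannot help. The bad set is then essentially all of the outcome space, nothing is compressed, and repeated resampling of $N_t(v)$ has no reason to terminate quickly. Second, the map from random bits to outcomes is many-to-one, since every unavailable color collapses to $\perp$. Hence the final coloring plus a bad-set index does not determine the bits consumed, and your decoder cannot recover them.

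The missing idea is Molloy's rule, which the paper uses: each $z\in N_t(v)$ samples uniformly from $\availt{z}\cup\set{\perp}$. Because $N_t(v)$ is independent, this never creates a conflict. It is a bijection between random choices and local colorings, so exactly $\log\card{\All_t(v)}$ bits are used and recovered. It also makes $z$ free with probability only $1/(a_t(z)+1)$.

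Even with this rule, the free-degree bound needs $a_t(z)\geq\Delta^{\eps/2}$ for every $z\in N_t(v)$ at the moment of resampling. Otherwise, for example, neighbors with $a_t(z)=0$ are free with certainty. Your generic ``while some vertex is violated'' loop does not guarantee this. The paper enforces it in Line 2 of $\fix$, which repairs all availability violations among the neighbors before calling $\resample$ for a free-degree violation. With these two changes, your compression accounting goes through essentially as in the paper.
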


We first describe the maintained invariants and the data structures supporting dynamic updates. Next, we introduce an algorithm for maintaining a proper partial coloring of the vertices and analyze its amortized update time, which forms the core technical component of our approach. We then show how this partial coloring can be extended to a full coloring, completing the proof of \Cref{thm:poly-Delta}.

\subsection{Setup}
Recall that $\Delta$ is a known upper bound on the maximum degree of the graph during the updates. Throughout this section, we fix $\eps > 0$ to be any arbitrarily small \emph{constant}, and use the following two palettes of colors: 
\[
	\CC_1 := \set{1,\cdots,q} \qquad \text{and} \qquad \CC_2 := \set{q+1,\ldots,(1+\eps) \cdot q} \qquad \text{where} \qquad q := \frac{(1+\eps) \cdot \Delta}{\ln{\Delta}}. 
\]

\noindent
Here $\abs{\CC_1}=q$ and $\abs{\CC_2}=\eps q$. We aim to use $\CC_1$ to color most of the vertices and use $\CC_2$ to complete the coloring through a cleanup algorithm.

For every $v \in V$, at any time $t$ during the sequence of graphs $(G_0,G_1,\cdots,G_t,\cdots)$, we define: 
\begin{itemize}
	\item $\colort{v}$: the color in $\CC_1 \cup \CC_2$ assigned to $v$ at time $t$; our goal is to keep $\colort{\cdot}: V \rightarrow \CC_1 \cup \CC_2$ as a proper coloring of $G_t$. 
\end{itemize}
The main part of the argument is a fully dynamic algorithm $\partialcolor$ that maintains a \emph{partial} coloring $\pcolort: V \rightarrow \CC_1 \cup \set{\perp}$ with several important properties for us. Specifically, we define, for any vertex $v \in V$ and time $t$: \begin{itemize}
	\item $\availt{v}$: colors in $\CC_1$ that are \emph{available} to $v$ under $\pcolort$, i.e., have not been assigned to any neighbor of $v$ in $N_t(v)$; further, set $a_t(v) := \card{\availt{v}}$. 
	\item $\freet{v}$: neighbors of $v$ in $N_t(v)$ that are \emph{free}, i.e., have been marked with $\perp$; further, set $f_t(v) := \card{\freet{v}}$. 
\end{itemize}

\begin{lemma}\label{lem:partial}
	$\partialcolor$ maintains a partial coloring $\pcolort: V \rightarrow \CC_1 \cup \set{\perp}$ after each update at time $t$ such that for all $v \in V$: 
	\begin{equation}\label{eq:constraints}
			a_t(v) \geq \Delta^{\eps/2} \qquad \text{and} \qquad f_t(v) \leq \frac{\eps q}{2}.
	\end{equation}
	Moreover, after at least $n$ many updates, the amortized recourse and update time of $\partialcolor$ with high probability are, respectively,
	\begin{itemize}
		\item $O(\Delta \cdot \log{n})$ and $O(\Delta^2 \cdot \log{n})$ against adaptive adversary, 
		\item $O(\Delta)$ and $O(\Delta^2)$ against oblivious adversary. 
	\end{itemize}
\end{lemma}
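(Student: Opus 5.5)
The plan is to design $\partialcolor$ as a Moser--Tardos/Molloy-style local resampling procedure and to analyze it by a dynamic entropy compression argument based on \Cref{lem:compression}.

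\textbf{The algorithm.} After an update $e_t=(u_t,v_t)$, if both endpoints now have the same color we set one of them to $\perp$. Then, while some vertex $v$ violates \Cref{eq:constraints}, we call $\resample(v)$. This step re-draws the colors of all vertices in $N_t(v)$ as follows. Each $w\in N_t(v)$ gets a uniformly random color in $\CC_1$ using fresh random bits. It is then uncolored to $\perp$ if the new color conflicts with a neighbor of $w$. This is Molloy's ``uniform color, uncolor on conflict'' step. We maintain, for every vertex, counters of color occurrences in its neighborhood, together with $a_t(\cdot)$ and $f_t(\cdot)$. Each recolored vertex then costs $O(\Delta)$ to update its neighbors' counters. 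Hence one $\resample$ costs $O(\Delta^2)$ time and $O(\Delta)$ recourse. The target is therefore to show that the amortized number of $\resample$ calls is $O(\log n)$ against an adaptive adversary and $O(1)$ against an oblivious one.

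\textbf{Local step.} The key static ingredient is Molloy's lemma: conditioned on the colors outside $N(v)$, triangle-freeness makes $N(v)$ an independent set, so each $w\in N(v)$ receives a uniform color. The ``uncolor on conflict'' rule depends only on $w$'s other neighbors, which are not in $N(v)$. Then $\Pr[\text{$v$ still violates}]\le \Delta^{-c}$ for a large constant $c$. The $f(v)$ bound follows from \Cref{prop:chernoff}. The $a(v)$ bound follows from the balls-into-bins bound \Cref{lem:concentration-ineq-negative-associated}: the number of colors \emph{not} hit is at least $q\cdot\Exp\prod(1-1/|S_w|)$. Since $q=(1+\eps)\Delta/\ln\Delta$, this is about $\Delta^{\eps/(1+\eps)}\gg\Delta^{\eps/2}$. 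Here we use Molloy's Jensen/entropy trick on the available lists.

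\textbf{Compression.} Suppose that after $T$ updates the algorithm has made $R$ resamples, consuming $m=R\cdot\Delta\log q$ random bits. We encode these bits as follows:
\begin{enumerate}
\item The final partial coloring, costing $n\log(q+1)$ bits.
\item The update sequence, costing $O(T\log n)$ bits. For an oblivious adversary this can be fixed in advance and costs nothing. For an adaptive adversary it is determined by the coins via the oracle, but we log it anyway.
\item For each resample, the identity of the resampled vertex relative to the previous one in a DFS-like forest structure (a vertex in $B(\cdot,2)$ or a fresh root after an update). This costs $O(\log\Delta)$ bits per resample plus $O(\log n)$ per root.
\item For each resample, the ``failure witness'': the resulting violating configuration of $N(v)$. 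A counting argument gives this at most $\Delta\log q - c'\log\Delta$ bits, since violations are rare.
\end{enumerate}
The recovery algorithm runs backwards in time. It uses the logged updates to reconstruct each $G_t$, and undoes resamples by reading off the overwritten colors from the witness.

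Plugging into \Cref{lem:compression} with $\delta=1/\mathrm{poly}(n)$ gives
\[
R\,\Delta\log q \le R(\Delta\log q - c'\log\Delta + O(\log\Delta)) + O(T\log n + n\log q).
\]
Since $T\ge n$, this yields $R=O(T\log n/\log\Delta)$ adaptively, and $R=O(T)$ obliviously, where the update log disappears.

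\textbf{Main obstacle.} The hardest step will be making the witness encoding genuinely save $\Omega(\log\Delta)$ bits per resample while remaining invertible. The recolor-then-uncolor step loses information about the new colors, so the witness must record enough of them, together with the free/available structure, to reconstruct the old colors of $N(v)$. We must also handle the graph changing between resamples under adaptive updates. I would first try to encode each resample by the pre-resample colors of $N(v)$ restricted to the violating pattern, exactly as in Molloy's entropy compression, and then check that the count of violating patterns is a $\Delta^{-\Omega(1)}$ fraction.
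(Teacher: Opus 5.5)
\textbf{The gap is in your resampling rule.} You have each $w\in N_t(v)$ draw a uniform color from $\CC_1$ and uncolor it on conflict. All neighbors of $w$ lie outside the independent set $N_t(v)$ and keep their colors, so $w$ ends up free with probability $1-a_t(w)/q$, which is not small. (Molloy's entropy-compression step, like the paper's, instead samples from the available list together with $\perp$.)

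Suppose, say, $a_t(w)\le q/2$ for all $w\in N_t(v)$. This is the typical situation, since each $w$ has close to $\Delta\gg q$ colored neighbors. Then $\Exp[f_t(v)]=\sum_{w\in N_t(v)}(1-a_t(w)/q)\ge \deg_t(v)/2$. For $\deg_t(v)=\Delta$ this is far above $\eps q/2=\Theta(\eps\Delta/\ln\Delta)$. Consequences:
\begin{itemize}
\item Your claimed local bound $\Pr[v\text{ still violates}]\le\Delta^{-c}$ is false for the free-neighbor constraint. Chernoff gives the opposite: the violation persists with probability $1-e^{-\Omega(\Delta)}$.
\item The counting step ``violations are rare'', on which the whole compression rests, fails in every such configuration, and there is no reason for the process to terminate. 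Only the $a$-part of your local step survives this rule.
\item The invertibility problem you single out as the main obstacle is another symptom of the same rule. The map from the $\log q$ random bits per vertex to the outcome is many-to-one. Whatever you log to disambiguate, the net saving per resample is at most $\log(1/\Pr[\text{bad}])$, which here is essentially zero.
\end{itemize}

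\textbf{What is needed, and what the paper does.} Resample each $z\in N_t(u)$ uniformly from $\availt{z}\cup\set{\perp}$. This stays proper because $N_t(u)$ is independent and $\availt{z}$ depends only on colors outside $N_t(u)$. Now every assignment in $\All_t(u)$ is equally likely and is produced by exactly one random choice. So the witness is just an index into $\Bad_t(u)$, the invertibility issue disappears, and $\Pr[z=\perp]=1/(a_t(z)+1)$.

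For this to be small you also need $a_t(z)\ge\Delta^{\eps/2}$ for every $z\in N_t(u)$ at the moment $u$ is resampled because of its $f$-constraint. Your flat ``while some vertex violates'' loop does not guarantee this: if more than $\eps q/2$ neighbors of $u$ have $a_t(z)=0$, they all become free with certainty. The paper's $\fix(u)$ enforces the precondition by first calling $\fix$ on every neighbor with low $a$-value before the second $\resample(u)$. With that precondition:
\begin{itemize}
\item $\Exp[f_t(u)]\le\Delta^{1-\eps/2}\ll\eps q/4$, and Chernoff gives failure probability $2^{-\eps\Delta/(50\ln\Delta)}$.
\item Jensen plus negative association give $2^{-\Delta^{\eps/2}/100}$ for the $a$-flaw.
\end{itemize}
Each resample thus saves $\Omega(\Delta^{\eps/2})$ bits, which dwarfs the $O(\log\Delta)$ bookkeeping, and the rest of your compression outline then goes through.

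\textbf{A minor point.} To get the bound simultaneously for all $T\ge n$, take $\delta=n^{-T}$ (resp.\ $\Delta^{-T}$ for oblivious adversaries) rather than $1/\poly(n)$, and union bound over $T$. The resulting $T\log n$ (resp.\ $T\log\Delta$) term is absorbed exactly like the update log.
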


We describe $\partialcolor$ in the following subsection and then provide its analysis. We then use~\Cref{lem:partial} to conclude the proof of~\Cref{thm:poly-Delta}. 

\subsection{The $\partialcolor$ Algorithm}\label{sec:alg-partial-color}

\partialcolor is a simple local search algorithm:
it starts with a graph $G_{t-1}$ that satisfies the constraints of~\Cref{lem:partial}, receives an update $e_t = (u_t,v_t)$, and simply tries to ``locally fix'' any vertex that no longer 
satisfies the guarantees of~\Cref{eq:constraints} by resampling the colors of its neighbors and recursing. Formally, the algorithm is as follows: 

\begin{Algorithm}\label{alg:partial-color}
$\partialcolor(e_t)$: Called when $e_t = (u_t,v_t)$ is updated in $G_{t-1}$ to obtain $G_t$. 
\begin{enumerate}
	\item Update the data structures $N_t(\cdot)$, $\availt{\cdot}$, and $\freet{\cdot}$ for vertices $u_t,v_t$. 
	\item If $e_t$ is inserted and $\pcolort(u_t) = \pcolort(v_t)$, set $\pcolort(u_t) = \perp$. 
	\item while there is $w \in \set{u_t,v_t} \cup N(u_t)$ with $a_t(w) < \Delta^{\eps/2}$ or $f_t(w) > \eps q/2$, run $\fix(w)$. 
\end{enumerate}


\noindent
\underline{\textbf{Subroutine} $\fix(u)$:}\label{subroutine:resample}
\begin{enumerate}
	\item\label{line:resample-1} If $a_t(u) < \Delta^{\eps/2}$, run $\resample(u)$. 
	\item If $f_t(u) > \eps q/2$, then while there is $v \in N_t(u)$ with $a_t(v) < \Delta^{\eps/2}$, run $\fix(v)$. 
	\item\label{line:resample-2} If $f_t(u) > \eps q/2$ still, run $\resample(u)$. 
\end{enumerate}

\noindent
\underline{\textbf{Subroutine} $\resample(u)$:} 
\begin{enumerate}
	\item For $z \in N(u)$: resample $\pcolort{(z)}$ uniformly at random from $\availt{z} \cup \set{\perp}$; update $\availt{\cdot}$ and $\freet{\cdot}$ for neighbors of $z$ in $N_t(z)$. 
	\item While there is $w \in B_t(u,2)$ with $a_t(w) < \Delta^{\eps/2}$ or $f_t(w) > \eps q/2$, run $\fix(w)$. 
\end{enumerate}
\end{Algorithm}

$\partialcolor$ aims to eliminate all unsatisfied constraints in \Cref{eq:constraints} by invoking the subroutine $\fix$ on any vertex that violates a constraint. Note that a priori, it is not at all clear that the subroutine $\fix$ 
would ever terminate even. We will get to that in the main part of the proof. But before that, the following observation shows that each \emph{terminating} call to $\fix$ strictly reduces the number of unsatisfied constraints.

\begin{observation}\label{obs:flaw-resolved}
	For any $u \in V$, assuming $\fix(u)$ terminates, the partial coloring maintained by the algorithm satisfies both constraints of~\Cref{eq:constraints} for $u$. Moreover, $\fix(u)$, upon termination, does not break any constraint that was already satisfied.
\end{observation}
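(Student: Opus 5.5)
The argument is an induction on the recursion depth of the (terminating) call tree of $\fix(u)$. The inductive hypothesis is the statement itself for all nested calls: a terminating call $\fix(w)$ ends with both constraints of~\Cref{eq:constraints} satisfied at $w$, and every constraint satisfied just before the call is still satisfied just after it. The key structural fact is locality. Changing $\pcolort(z)$ affects only $\availt{\cdot}$ and $\freet{\cdot}$ of neighbors of $z$. So a recoloring step inside $\resample(u)$, which touches only $z\in N_t(u)$, can only affect constraints of vertices in $B_t(u,2)$. Every other constraint is left untouched by the random step itself, and only nested $\fix$ calls can change it.

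First consider $\resample(u)$. After the recoloring loop in Step 1, the only constraints that may have become violated belong to vertices in $B_t(u,2)$. Step 2 loops while some $w\in B_t(u,2)$ is violated and calls $\fix(w)$. By induction, each such call fixes $w$ and breaks nothing that was satisfied, so it strictly shrinks the violated set. Assuming termination, the loop exits only when \emph{every} vertex of $B_t(u,2)$ satisfies both constraints, and in particular $u\in B_t(u,2)$ does. Vertices outside $B_t(u,2)$ were changed only through nested $\fix$ calls, which by induction never break a satisfied constraint. Hence a terminating $\resample(u)$ leaves all of $B_t(u,2)$ satisfied and breaks no constraint that was satisfied before it.

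Now consider $\fix(u)$ step by step. Step 1 runs $\resample(u)$ only if $a_t(u)<\Delta^{\eps/2}$; afterwards $u$ is fully satisfied and nothing previously satisfied is broken. Step 2 runs only when $f_t(u)>\eps q/2$, so $u$'s free-constraint was already violated. Its loop calls $\fix(v)$ on neighbors with low availability, and by induction these calls break nothing satisfied. In particular, if $u$'s availability constraint held after Step 1, it still holds after Step 2. Step 3 either finds $f_t(u)\le \eps q/2$, in which case both constraints at $u$ hold, or it calls $\resample(u)$, which by the previous paragraph leaves $u$ satisfied and preserves satisfied constraints. Chaining these three steps gives both claims of the observation.

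The one subtle point is the base of the induction. We need the claim for calls whose nested calls all terminate, and we need the ``break nothing'' property to hold even though the random recoloring itself may temporarily break constraints. The resolution is that temporary breakage only ever occurs inside $B_t(u,2)$, and the final while-loop of $\resample$ explicitly repairs that entire region. I would therefore set up the induction on the depth of the finite recursion tree, with the locality claim, that recoloring $N_t(u)$ only affects constraints in $B_t(u,2)$, stated first as a separate fact.
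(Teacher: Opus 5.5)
Your proof is correct and follows the same approach as the paper. Both arguments rest on the same three ingredients: recoloring inside $\resample(u)$ only affects constraints of vertices in $B_t(u,2)$; the final while-loop of $\resample(u)$ repairs every violation in $B_t(u,2)$, including $u$'s own; and nested terminating calls are handled by induction. Your version organizes this more carefully, with an explicit induction on the height of the call tree and a line-by-line check of the three steps of $\fix(u)$.
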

\begin{proof}
	Whenever $u$ does not satisfy a constraint, subroutine $\fix(u)$, assuming other calls to \fix triggered along the way terminate, triggers a call to $\resample(u)$.
Since $u$ is in its own 2-hop neighbourhood, \resample triggers calls to $\fix(u)$ again. Therefore, the procedure $\fix(u)$ does not exit its while-loops as long as any of the constraints of $u$ remain unsatisfied.

Observe that a call to $\fix(v)$ for any $v \in V$ can generate new violated constraints
only when it executes $\resample(v)$, which changes the colors of vertices in $N_t(v)$.
Such color changes can only affect the constraints of vertices in $B_t(v,2)$.
However, $\resample(v)$ explicitly invokes $\fix(w)$ for every vertex
$w \in B_t(v,2)$ whose constraints may have been affected.

Consider an initial call to $\fix(u)$. Whenever a recoloring causes a constraint
violation at some vertex $v$, the algorithm invokes $\fix(v)$.
Assuming inductively that each invocation of $\fix(v)$ restores all constraints of $v$
before terminating, it follows that all constraint violations caused by the execution
of $\fix(u)$ are eventually repaired. Therefore, upon termination of $\fix(u)$,
no previously satisfied constraints will be violated.
\end{proof}

To bound the update time of $\partialcolor$ we use the entropy compression technique. We argue that the algorithm effectively compresses the random bits used to color the vertices as part of the \resample subroutine. Consequently, if the algorithm runs for too long, it would compress the randomness beyond what is allowed by \Cref{lem:compression}, leading to a contradiction. The key to this step is the following probabilistic lemma. A similar lemma also appears in~\cite{Molloy18} 
(although the second part is entirely different given we use a different ``flaw'' in the language of~\cite{Molloy18}).

\begin{lemma}\label{lem:prob-bound-bad-events}
	At any point in time in the algorithm: 
	\begin{enumerate}
	\item when calling $\resample(u)$ in Line~\eqref{line:resample-1} of $\fix(u)$:  
	\[
		\prob{a_t(u) < \Delta^{\eps/2}}\leq 2^{-\frac{\Delta^{\eps/2}}{100}};
	\]
	\item when calling $\resample(u)$ in Line~\eqref{line:resample-2} of $\fix(u)$:
	\[
	\prob{f_t(u)>\frac{\eps q}{2}}\leq 2^{-\frac{\eps\cdot \Delta}{50\ln \Delta}}.
	\]
	\end{enumerate}
\end{lemma}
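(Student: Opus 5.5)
Both probabilities refer to the state \emph{right after} the call $\resample(u)$ in question has recolored $N_t(u)$. The randomness is only over the fresh choices made for the neighbors $z \in N_t(u)$, each drawn uniformly from $\availt{z} \cup \set{\perp}$. Everything else is fixed: the colors of vertices outside $N_t(u)$, and in particular the colors of the second neighbors of $u$, which determine $\availt{z}$ for each $z$. The key point is triangle-freeness. Since $N_t(u)$ is an independent set, the list $\availt{z}$ of a neighbor $z$ depends only on colors of vertices at distance exactly two from $u$ (and on $u$ itself), never on other neighbors of $u$. Hence, conditioned on the history, the $\deg(u)$ choices are mutually independent and each is uniform over a fixed set $L_z := \availt{z}\cup\set{\perp}$. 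Below we write $\ell_z := \card{L_z}$. Note that an adaptive adversary does not interfere here, because the fresh coins are drawn after all prior information is fixed.

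\textbf{Part 2.} Here $f_t(u) = \sum_{z \in N(u)} \mathbb{1}[\pcolort(z)=\perp]$ is a sum of independent indicators with $\prob{\pcolort(z)=\perp} = 1/\ell_z$. Line~\eqref{line:resample-2} is reached only after the while-loop in step~2 of $\fix(u)$ has fixed all neighbors, so when $\resample(u)$ begins, every $z \in N(u)$ has $a_t(z) \geq \Delta^{\eps/2}$. To use this for the new lists, I would argue that the lists $\availt{z}$ are determined by second neighbors, which $\resample(u)$ does not touch; so the lower bound $\ell_z \geq \Delta^{\eps/2}$ holds for the lists actually sampled from. Then $\expect{f_t(u)} \leq \Delta^{1-\eps/2}$, which is much smaller than $\eps q/2 = \Theta(\eps\Delta/\ln\Delta)$. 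A Chernoff bound in the large-deviation regime, $\prob{X \geq R} \leq 2^{-R}$ for $R \geq 6\mu$ (or \Cref{prop:chernoff} with $\mu$ raised to $\eps q/4$ and $\delta = 1$), gives $\prob{f_t(u) > \eps q/2} \leq \exp(-\Omega(\eps q)) \leq 2^{-\eps\Delta/(50\ln\Delta)}$ once $\Delta$ is large.

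\textbf{Part 1.} This is Molloy's argument and the main obstacle. We have $a_t(u) = \card{\CC_1 \setminus \set{\pcolort(z) : z \in N(u)}}$, so $q - a_t(u)$ is the number of nonempty bins in the balls-into-bins experiment of \Cref{lem:concentration-ineq-negative-associated}, with balls $z$ and allowed bins $L_z$; the bin $\perp$ is treated as an extra bin that is ignored. The bound $a_t(u) \geq \Delta^{\eps/2}$ cannot hold for arbitrary lists, since a vertex of degree $\Delta$ whose lists are all of full size $q+1$ kills almost all colors. So I would bound $\expect{a_t(u)}$ using the entropy of the lists, as Molloy does: $\expect{a_t(u)} = \sum_{c \in \CC_1} \prod_{z : c \in L_z} (1 - 1/\ell_z)$. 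By AM-GM this is at least
\[
	q \cdot \exp\Bigl(-\tfrac{1}{q}\textstyle\sum_z (\ell_z - 1)\cdot \tfrac{1}{\ell_z - 1}\cdot(1+o(1))\Bigr) \approx q\, e^{-\deg(u)/q} \geq q\, \Delta^{-1/(1+\eps)},
\]
which is at least $\Delta^{\eps/2}\cdot\Delta^{\Omega(\eps)}$. One must handle small $\ell_z$ separately, where $(1-1/\ell)$ is far from $e^{-1/\ell}$; I would use $1 - 1/\ell \geq e^{-1/(\ell-1)}$, so each ball contributes total exponent at most $1$ only if $\ell_z \geq 2$, and $\ell_z = 1$ (only $\perp$ left) contributes no color at all. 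Finally, apply \Cref{lem:concentration-ineq-negative-associated} to the count of nonempty bins: it has lower-tail concentration, so $a_t(u)$ has upper-tail concentration around its mean, and I need the lower tail of $a_t(u)$. So I would instead work with the empty-bin indicators $Y_c$, which are negatively associated as in the proof sketch of \Cref{lem:concentration-ineq-negative-associated}, and apply the NA Chernoff bound with deviation $t = \expect{a_t(u)}/2$. This gives probability $\exp(-\expect{a_t(u)}/8) \leq 2^{-\Delta^{\eps/2}/100}$. Getting this AM-GM expectation bound right, with the $(1+\eps)$ slack in $q$ absorbing the $\perp$ option and small lists, is the step I expect to be hardest.
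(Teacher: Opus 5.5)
Your proposal is correct and follows the paper's proof essentially step for step. Part~2 is the same expectation bound $\expect{f_t(u)}\le \Delta^{1-\eps/2}\ll \eps q/4$ followed by Chernoff. Part~1 is the same chain: use $1-\frac{1}{a+1}\ge e^{-1/a}$, average the exponents $\rho(c)$ over $\CC_1$ (your AM--GM is the paper's Jensen step) to get $\expect{a_t(u)}\ge q e^{-\Delta/q}\ge 2\Delta^{\eps/2}$, and then apply lower-tail concentration to the negatively associated empty-bin indicators. That last step is what the proof sketch of \Cref{lem:concentration-ineq-negative-associated} actually establishes, and you are right that its statement, read literally as being about non-empty bins, points in the wrong direction.

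One harmless slip: your aside that the bound ``cannot hold for arbitrary lists'' is mistaken. Full lists give $q(1-\frac{1}{q+1})^{\Delta}\approx qe^{-\Delta/q}$ surviving colors in expectation, which is exactly the equality case of the computation you then carry out for arbitrary lists.
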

\begin{proof} We prove each part separately.

	\textbf{Part 1.}
	The proof is by bounding $\expect{a_t(u)}$ and applying a concentration bound.
	Recall that a color $c \in \CC_1$ is available for $u$ at time $t$ iff no neighbor $v \in N_t(u)$ samples color $c$. Since each neighbor $v$ chooses a color uniformly at random from the set $\availt{v} \cup \{\perp\}$, the probability that $v$ 
	chooses $c$ is exactly $1/(a_t(v) + 1)$ if $c \in \availt{v}$, and $0$ otherwise. 
	By the independence of these color choices (as $G$ is triangle-free), the probability that $c \in \availt{u}$ is:
	\[
	\Pr\left(c\in \availt{u}\right) = \prod\limits_{\substack{v\in N_t(u) \\ c\in \availt{v}}} \left(1-\frac{1}{a_t(v)+1}\right).
	\]
	Therefore, we have
	\begin{align*}
		\expect{a_t(u)} 
		&= \sum_{c\in \CC_1} \prod\limits_{\substack{v\in N_t(u) \\ c\in \availt{v}}} \left(1-\frac{1}{a_t(v)+1}\right)\\
		&\geq \sum_{c\in \CC_1} \prod\limits_{\substack{v\in N_t(u) \\ c\in \availt{v}}} \exp\paren{-\frac{1}{a_t(v)}} \tag{as $1-1/(x+1) \geq e^{-1/x}$ for $x > 0$}\\
		&= \sum_{c\in \CC_1} \exp(-\!\sum\limits_{\substack{v\in N_t(u) \\ c\in \availt{v}}} \frac{1}{a_t(v)}) =q\cdot \Exp_{c \sim \CC_1}\left[e^{-\rho(c)}\right],
	\end{align*}
	where we define
	\[
	\rho(c):= \sum\limits_{\substack{v\in N_t(u) \\ c\in \availt{v}}} \frac{1}{a_t(v)} \quad \text{for each $c\in \CC_1$}. 
	\]
	We can see that 
	\begin{align*}\label{eq:expect-rho(c)}
		\Exp_{c\sim \CC_1}\left[\rho(c)\right] 
		= \sum_{c\in\CC_1}\frac{1}{q}\cdot \rho(c) 
		=  \sum_{c\in\CC_1}\frac{1}{q}\cdot \sum\limits_{\substack{v\in N_t(u) \\ c\in \availt{v}}} \frac{1}{a_t(v)} 
		 = \frac{1}{q}\cdot \sum_{v\in N_t(u)} \sum_{c\in \availt{v}}\frac{1}{a_t(v)} \leq \frac{\Delta}{q}.
	\end{align*}
	This is because each $v\in N_t(u)$ has exactly $a_t(v)$ many available colors and $\deg_t(u)\leq \Delta$. Therefore, using Jensen's inequality and convexity of $e^{-x}$ we get 
	\begin{equation}\label{eq:expect-avail}
		\expect{a_t(u)}  \geq q\cdot \exp(-\Exp[\rho(c)]) \geq q\cdot \exp(-\frac{\Delta}{q}) = \frac{(1+\eps)\cdot \Delta}{\ln \Delta} \cdot \exp(-\frac{\ln\Delta}{1+\eps})\geq 2\Delta^{\eps/2}, 
	\end{equation}
	where the last inequality is true because $\Delta$ is sufficiently large. 

	We now prove a concentration bound for $a_t(u)$. The sampling of colors here corresponds to the balls-into-bins framework described in \Cref{lem:concentration-ineq-negative-associated}, where the colors in $\CC_1$ serve as the bins and the neighbors
	 $v \in N_t(u)$ act as the balls. Specifically, for each ball $i \in [\deg_t(v)]$, the set of allowed bins $S_i$ in the experiment is $\availt{v_i} \cup \set{\perp}$. 
	Therefore, we can use the same concentration inequality as in \Cref{lem:concentration-ineq-negative-associated} to bound $a_t(u)$. Using \Cref{eq:expect-avail}, we get
	\begin{align*}
	\prob{a_t(u) < \Delta^{\eps/2}} \leq \prob{a_t(u)< \frac{\expect{a_t(u)}}{2}} 
	\leq \exp{(-\frac{\expect{a_t(u)}}{12})}
	\leq \exp{(-\frac{\Delta^{\eps/2}}{12})}
	\leq 2^{-\frac{\Delta^{\eps/2}}{100}},  
	\end{align*}
	concluding the proof of the first part. 

	\noindent
	\textbf{Part 2.} 
	For any $v \in N_t(u)$, let $Z_v$ be the indicator random variable for $v$ picking $\perp$. We have
	\[
		\expect{f_t(u)}=\sum_{v\in N_t(u)} \expect{Z_v} = \sum_{v \in N_t(u)} \frac{1}{a_t(v)+1}.
	\] 
	By~\Cref{obs:flaw-resolved} and since we now consider $\resample(u)$ called in Line~\eqref{line:resample-2}, we know that $a_t(v) \geq \Delta^{\eps/2}$ for all $v \in N_t(u)$. 
	Hence, we have, 
	\begin{align*}
		\expect{f_t(u)} \leq \Delta \cdot \frac{1}{\Delta^{\eps/2}} \ll \frac{\eps q}{4}. 
	\end{align*}
	Since the random variables $\{Z_u\}_{u\in N_t(v)}$ are independent, by \Cref{prop:chernoff}, we have
	\begin{align*}
		\prob{f_t(v) \geq \eps q/2} 
		&\leq \exp(-\frac{\eps q}{12}) \leq 2^{-\frac{\eps q}{50}} \leq 2^{-\frac{\eps\cdot \Delta}{50\ln \Delta}},
	\end{align*}
concluding the proof of the second concentration inequality.
\end{proof}

We now define a compression scheme for compressing random bits used by the algorithm over a \emph{given} sequence of updates $(G_0,G_1,\ldots,G_T)$ for some $T \geq n$. 
In the following, we focus on the case when the updates are done by an \emph{adaptive} adversary (which is the more interesting case). We will then briefly describe how to handle oblivious adversaries more efficiently. 

\newcommand{\All}{\ensuremath{\textnormal{All}}\xspace}
\newcommand{\Bad}{\ensuremath{\textnormal{Bad}}\xspace}
\newcommand{\Bada}{\ensuremath{\textnormal{B}_a}\xspace}
\newcommand{\Badf}{\ensuremath{\textnormal{B}_f}\xspace}

To continue, we need some definitions. At any point of time during the algorithm, define: 
\begin{itemize}
	\item $\All_t(v)$: the set of all possible assignments of colors to neighbors $u$ of $v$ from $\availt{u} \cup \set{\perp}$. This way, we have, 
	\[
		\card{\All_t(v)} = \prod_{u \in N_t(v)} \paren{a_t(u)+1}. 
	\]
	\item $\Bad_t(v) \subseteq \All_t(v)$: those assignments that will lead to $v$ having $a_t(v) < \Delta^{\eps/2}$ or $f_t(v) > \eps q/2$. 
	
	(We shall note that the subscript $t$ these definitions is a bit misleading since these sets and parameters can vary many times during processing of a single update; however, to avoid cluttering the notation, 
	we stick with using subscript $t$ here.)
\end{itemize}

\paragraph{The compression scheme.} We maintain a \textit{log} of the algorithm on $(G_1,\ldots,G_T)$ that allows us to reconstruct all random bits used during its execution. At time step $t$, with update $e_t = (u_t, v_t)$, we record the following information:
\begin{enumerate}
	\item A single bit indicating the type of update (insertion or deletion), together with $2\log n$ bits specifying the edge $e_t$.
	\item A single bit indicating whether the algorithm terminates at the second line of $\partialcolor$.
	\item If the algorithm reaches Line 3 of $\partialcolor$, for any vertex $w \in N_t(u_t)$ for which $\fix(w)$ was called, we write $O(\log \Delta)$ bits to specify the vertex (we need $\log{\Delta}$ bits instead of $\log{n}$ as we specify this vertex as the $i$-th neighbor of 
	$u_t$ for some $i \in [\Delta]$). Additionally, we use $O(1)$ bits to indicate which constraint violation triggered the call to $\fix$.
	\item For each call to $\fix(u)$, the input $u$ is specified before this and we have also written down the type of constraint violation.
	This allows us to know whether $\fix(u)$ terminates in~Line~\eqref{line:resample-1} or Line~\eqref{line:resample-2}. 
	If $\fix(u)$ terminates in Line~\eqref{line:resample-1}, we consider the set $\Bad_t(u)$, namely, the assignment of colors to neighbors of $u$ at this point in time that will result in $u$ having a violated constraint. 
	Since $u$
	does have a violated constraint, the assignment of $\pcolort$ to $N_t(u)$ in this step belongs to $\Bad_t(u)$. We will write the index $\ell \in \card{\Bad_t(v)}$ to the log to specify this part of $\pcolort$. 
	
	If $\fix(u)$ goes beyond Line~\eqref{line:resample-1}, we again record each $v \in N_t(v)$ that we call $\fix(v)$ on using $O(\log{\Delta})$ bits. Then, we go to Line~\eqref{line:resample-2} and if $\fix(u)$ is called here, we write the index $\ell \in \card{\Bad_t(u)}$ 
	corresponding to the assignment of $\pcolort$ to $N_t(u)$ at this point (which we know belongs to $\Bad_t(u)$). 
	Finally, upon completing $\fix(v)$, we write $O(1)$ bits to mark the end of this subroutine. This process is repeated recursively within each call to $\fix$.

	\item At the end of all of the updates we also log the final choice of $\pcolort$ in $n\log{\Delta}$ bits. 
\end{enumerate}

This concludes the description of our compression scheme. Note that at no point in this process, we explicitly wrote down the random choices made by $\resample$ at the time they were made. Nonetheless, the following observation
allows us to argue that  we can recover all those random bits from the compression. 

\begin{observation}\label{obs:compression-scheme}
	Given the log described above for a sequence of $T$ updates and $S_T$ calls to $\resample$, we can recover all random bits used by $\resample$ in the execution of the algorithm. 
\end{observation}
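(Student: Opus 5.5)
The recovery runs \emph{backwards in time}. We start from the final partial coloring $\pcolort$ at time $T$, written in the last $n\log\Delta$ bits of the log, and undo the execution one $\resample$ call at a time, from the last call to the first. First we parse the log forwards to rebuild the whole ``skeleton'' of the execution. The update bits give the graph sequence $G_0,\ldots,G_T$, so $N_t(\cdot)$ and $B_t(\cdot,2)$ are known at every step. The bit for Line 2 of $\partialcolor$, the recorded vertices, the constraint types and the end-of-subroutine markers give the full recursion tree of $\fix$ and $\resample$ calls, together with which vertex each $\resample(u)$ acted on and at which graph $G_t$. Vertices are written as the $i$-th neighbor of a vertex already known (for $u$ in $B_t(\cdot,2)$, as a neighbor of a neighbor, still $O(\log\Delta)$ bits), so this parsing is unambiguous as long as the graph at that moment is known, and it is. The one free parameter, the vertex set touched by each $\resample$, is therefore determined without knowing any colors.

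Next comes the backward induction step. Suppose we know the coloring $\pcolort$ right after some $\resample(u)$ call at time $t$ (plus the effect of Line 2 recolorings to $\perp$). The only vertices whose colors changed in that call are those in $N_t(u)$. Our goal is to recover their colors immediately \emph{before} the call; the colors outside $N_t(u)$ are unchanged. For this we use the index $\ell$ into $\Bad_t(u)$ logged just before that $\resample$. The set $\Bad_t(u)$ depends only on $N_t(u)$ and on $\availt{z}$ for $z\in N_t(u)$. Since $G$ is triangle-free, $\availt{z}$ is a function of the colors of $N_t(z)$, and these do not include other vertices of $N_t(u)$. The subtle point is that $N_t(z)$ contains $u$ and vertices at distance $2$ from $u$, which the resample does not touch. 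So $\availt{z}$, and hence $\Bad_t(u)$ with its fixed canonical ordering, can be computed from the coloring after the call. Then $\ell$ identifies the pre-resample assignment on $N_t(u)$.

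Having both the before and after colorings on $N_t(u)$ reveals the random choice made for each $z$: a uniform element of $\availt{z}\cup\set{\perp}$, encoded by its index. This is exactly the randomness that call consumed. Interleaved steps are deterministic given the log. For a Line 2 recoloring of $u_t$ to $\perp$, the logged bit tells us it happened. The before-color of $u_t$ is then $\pcolort(v_t)$, which is known because $v_t$ is unchanged at that moment. Edge updates are undone by reverting to $G_{t-1}$. Iterating over all $S_T$ calls recovers all random bits.

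The main obstacle is making the backward computation of $\availt{z}$ at the right moment rigorous. Within one update many nested resamples modify colors of $B_t(u,2)$. We must argue that $\availt{z}$ for $z\in N_t(u)$ depends only on vertices untouched by that specific resample. We also need the ordering of $\Bad_t(u)$ to be canonical given the graph and those colors. Triangle-freeness handles the first point. For the second we fix a lexicographic order on $\prod_{z}(\availt{z}\cup\set{\perp})$, and then it follows.
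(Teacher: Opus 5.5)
Your proposal is correct and is essentially the paper's argument: undo the $\resample$ calls in reverse order starting from the logged final coloring, and use the logged index into $\Bad_t(u)$ to recover the colors of $N_t(u)$ before each call. $\Bad_t(u)$ can be recomputed because, as you make explicit via triangle-freeness, each $\availt{z}$ for $z \in N_t(u)$ depends only on colors outside $N_t(u)$. Your extra handling of the Line 2 recolorings of $u_t$ to $\perp$, which the paper's proof skips, needs one bit per update recording whether $u_t$ was actually recolored (the bit in the paper's log only says whether the update proceeds to Line 3); this costs $O(T)$ bits and does not affect the compression bound.
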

\begin{proof}
	In the compression scheme, the entire sequence of graphs
	$G_1,\ldots,G_T$ is recorded explicitly.
	Moreover, every call to $\fix$ is fully specified in the log by recording
	the index of the vertex for which $\fix$ is invoked (in a ``local'' way by starting from $u_t$ of the corresponding update at time $t$, and using the stored $\log{\Delta}$-bit addresses to go to the specified neighbor and so on), together with the type of constraint violation that triggered the call.
	As a result, for any call to $\fix(v)$, we can uniquely reconstruct $v$ and the exact graph on which this call is executed.

	We reconstruct the random bits in the \emph{reverse} order.
	At time $T$, the compression explicitly records the final partial coloring $\varphi_T$. The log also specifies the last vertex $v$ whose call to $\fix$ triggered a call to $\resample$.
	Since $\resample(v)$ only randomizes the colors of vertices in the
	neighborhood of $v$, the coloring $\varphi_T$ restricted to this neighborhood uniquely determines the random bits used in this call.

	Furthermore, the compression includes an index pointing to the violated
	constraint that caused this call to \fix.
	This information allows us to reconstruct the coloring of the neighborhood of $v$ \emph{before} the call to $\resample(v)$, as the colors of all vertices outside this neighborhood remain unchanged during this step. Equivalently, from the fixed coloring outside the neighborhood we can reconstruct the set $\Bad(v)$ and hence recover the previous local coloring.

	Thus, we recover both the random bits used in the $S$-th call to $\resample$ and the partial coloring immediately preceding this call.
	By iterating this argument backwards for the remaining $S-1$ calls,
	we recover all random bits used by the algorithm.
\end{proof}

By~\Cref{obs:compression-scheme}, we can recover all random bits used by the algorithm on a sequence of updates using the compression scheme above. 
We now use~\Cref{lem:prob-bound-bad-events} to argue that this compression scheme does indeed compress the random bits: roughly speaking, this is because between all possible color assignments to neighborhood of a vertex, 
most of the resulting colorings do not violate the constraints in \Cref{eq:constraints} by~\Cref{lem:prob-bound-bad-events}. We formalize and use this in the following lemma to bound the runtime of $\partialcolor$. 

\begin{lemma}\label{lem:upper-bound-resample}
	Let $S_T$ be the random variable for the number of times $\resample$ is called after $T$ updates. Against an adaptive adversary, we have:
	\begin{align*}
		\prob{S_T\geq T\log{n}+n}\leq n^{-T}, 
	\end{align*}
	and against an oblivious adversary we have
	\begin{align*}
		\prob{S_T\geq T+n}\leq \Delta^{-T}.
	\end{align*}
	(note that the main difference between the two equations are in $T\log{n}$ and $T$ terms and the RHS). 
\end{lemma}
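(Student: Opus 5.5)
We would prove \Cref{lem:upper-bound-resample} by the standard entropy compression counting argument, run on the joint process of algorithm and adversary. Fix a number $S$ of $\resample$ calls and view the random bits used by the first $S$ calls as a string $x$. Each call to $\resample(u)$ draws, independently for every $z \in N_t(u)$, a uniform element of $\availt{z}\cup\set{\perp}$. The total entropy consumed by that call is therefore $\log\card{\All_t(u)}$. To work with bits we either sample via rejection from a reservoir of uniform bits or, more simply, measure entropy directly. That is, we treat the sampled outcome of call $i$ as a uniform element of $\All$ and apply a counting version of \Cref{lem:compression}: the number of distinct executions with $S$ resamples is at most the number of distinct logs.

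By \Cref{obs:compression-scheme}, each such execution is determined by its log. Here the adaptive adversary is an arbitrary deterministic function of past coins, so the updates are themselves determined by the coins. Logging them explicitly is what lets us handle the adaptive case.

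The key step is to bound the length of the log against the entropy it encodes. For the $i$-th resample, on vertex $u_i$, the log stores an index into $\Bad_t(u_i)$. By \Cref{lem:prob-bound-bad-events}, together with the fact that the uniform distribution on $\All_t(u_i)$ is exactly the resampling distribution, we get
\[
\card{\Bad_t(u_i)} \leq 2^{-\Delta^{\eps/2}/100}\card{\All_t(u_i)}
\]
in case 1, and the analogous bound with $2^{-\eps\Delta/(50\ln\Delta)}$ in case 2. So each resample saves at least $s:=\Delta^{\eps/2}/100$ bits relative to the entropy it consumed.

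Against this saving we pay the overhead, and the overhead per resample is $O(\log\Delta)$ bits. This covers the addresses of the $\fix$ calls, the constraint type, and the end markers. Every $\fix$ call that is logged either leads to a resample or is charged to a parent resample or update. Each resample triggers at most $\Delta^2$ candidate $w\in B_t(u,2)$, but only $\fix$ calls that actually execute a $\resample$ need addresses beyond $O(1)$ bits. Such addresses are local, taking $O(\log\Delta)$ bits, since $\card{B_t(u,2)}\le\Delta^2+1$.

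Per update there are $O(\log n)$ bits for the edge and type, plus $O(\log\Delta)$ per triggered $\fix$ from Line 3. Finally, $\varphi_T$ costs $n\log(q+1)\le n\log\Delta$ bits. For $\Delta\ge\Delta_0$ the per-resample net saving is at least $s - O(\log\Delta)\ge 1$, indeed $\ge \log\Delta$. The total log length is therefore at most
\[
\text{entropy}(x) - S\log\Delta + O(T\log n) + n\log\Delta .
\]

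Now apply \Cref{lem:compression}, in its counting form, with $\delta=\prob{S_T\geq S}$. This gives
\[
\log(1/\delta)\ge S\log\Delta - O(T\log n)-n\log\Delta .
\]
Choosing $S=T\log n+n$ (after adjusting the constant by scaling with the $O(\cdot)$, or using that the net saving per resample exceeds a large multiple of $\log\Delta$ when $\Delta_0$ is large) yields $\delta\le n^{-T}$. In the oblivious case the update sequence is fixed in advance, so it need not be logged: the $T\log n$ edge bits vanish and only $O(1)$ bits per update remain. The same computation with $S=T+n$ gives $\delta\le\Delta^{-T}$.

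The main obstacle I expect is making the ``variable-size alphabet'' compression rigorous. The sets $\All_t(u)$ and $\Bad_t(u)$ depend on the current state, and under an adaptive adversary on past coins too. I would handle this by an injective map from \emph{execution transcripts} (sequences of resample outcomes) to logs. Summing over transcripts with $S$ resamples, weighted by their probability $\prod_i \card{\All(u_i)}^{-1}$, and grouping by log, each log contributes probability $\prod_i \card{\Bad(u_i)}/\card{\All(u_i)} \le 2^{-sS}$. Summing over at most $2^{\text{overhead}}$ logs then gives the bound. A secondary point is verifying that the $\fix$ bookkeeping really costs only $O(\log\Delta)$ per resample amortized, and that the uncounted nested $\fix$ calls that do not resample are all charged correctly.
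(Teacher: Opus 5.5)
Your proposal is correct and is essentially the paper's proof. It uses the same compression of the joint algorithm--adversary run: log the updates explicitly (only $O(1)$ bits per update in the oblivious case), record an index into $\Bad_t(u)$ instead of the resampled colors so that \Cref{lem:prob-bound-bad-events} saves about $\Delta^{\eps/2}/100$ bits per $\resample$, pay $O(\log\Delta)$ bits of $\fix$ bookkeeping per call and $n\log\Delta$ bits for the final coloring, and take $S=T\log n+n$ (resp.\ $S=T+n$).

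The only difference is the formal wrapper. The paper pads every $\resample$ to exactly $\Delta\log\Delta$ random bits and applies \Cref{lem:compression} to a fixed-length string, whereas you weight transcripts by $\prod_i 1/\card{\All_{t_i}(u_i)}$ and group them by skeleton. The accounting point you flagged does go through, and charging to a parent $\resample$ (up to $\Delta^2\log\Delta$ bits per call, which would swamp the saving) is not needed. Every $\fix(w)$ is invoked on a currently violating $w$, so it either resamples itself or immediately calls $\fix(v)$ on a neighbor with $a_t(v)<\Delta^{\eps/2}$, which resamples in Line~\eqref{line:resample-1}. Hence there are at most twice as many $\fix$ calls as $\resample$ calls.
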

\newcommand{\colorProd}{\ensuremath{\mathcal{W}}}
\newcommand{\adversaryOracle}{\ensuremath{\mathcal{O}_A}\xspace}
\begin{proof}
	We first prove the inequality against an adaptive adversary.
	Define 
	\[
	g(n, T) := T\log{n}+n.
	\]
	Suppose, for the sake of contradiction, that there exists $T \geq n$ and an adaptive adversary oracle \adversaryOracle that can force $\partialcolor$ to have 	
	$\prob{S_T\geq g(n, T)} > n^{-T}$. 

	Using the oracle $\adversaryOracle$ to generate the update sequence, we execute $\partialcolor$ and obtain a string $R_T$ consisting of all random bits used by the algorithm up to the step where the number of calls to $\resample$ reaches $S_T=g(n, T)$, at which point we halt. We also run the compression scheme above (with some minor modifications) as described below. 

	Let $v_i$ be the vertex triggering the $i$-th call to \resample at time $t_i$. Define
	\[
		\colorProd_i := \card{\All_{t_i}(v)}, 
	\]
	representing the number of possible colorings for the neighborhood of $v_i$. Each call to \resample picks an integer from $[\colorProd_i]$ uniformly, requiring $\log \colorProd_i$ random bits. Note that $\log \colorProd_i \leq \Delta\log{\Delta}$ since
	every vertex has at most $\Delta$ neighbors and less than $\Delta$ available colors. 
	To normalize the bit-count, let $\ell_i := \Delta \log \Delta - \log \colorProd_i$. By generating $\ell_i$ additional random bits, each call uses exactly $\Delta \log \Delta$ bits. We consider these extra bits as 
	the randomness of the algorithm also. This way, the total number of bits generated by the algorithm is: 
	\[
	\abs{R_T} = \sum_{i=1}^{g(n, T)} \log \colorProd_i + \ell_i = g(n, T)\cdot \Delta\log \Delta.
	\]

	We now describe the compression. We basically do exactly as the compression scheme described earlier except that the extra $\sum \ell_i$ bits are recorded without any changes. 
	Recall that $\Bad_{t_i}(v_i)$ is the set of ``bad'' colorings for the neighborhood of $v_i$, which cause a constraint violation on this vertex. 
	By~\Cref{lem:prob-bound-bad-events}, regardless of which lines of $\fix$ is used to call $\resample$, 
	\[
		\abs{\Bad_{t_i}(v_i)} \leq \max\paren{2^{-\frac{\Delta^{\eps/2}}{100}}~,~2^{-\frac{\eps\Delta}{50\ln \Delta}}} \cdot \abs{\All_{t_i}(v_i)}\leq 2^{-\frac{\Delta^{\eps/2}}{100}} \cdot \colorProd_i;
	\]
	the last inequality is true as $\Delta$ is sufficiently large.
	The compression scheme logs 
	\[
	\log \abs{\Bad_{t_i}(v_i)} + \ell_i + O(1)
	\]
	bits per call to $\resample$ (ignoring the recursive calls inside $\resample$). Summing over all calls to $\resample$, the length of the log written for these parts is: 
		\begin{align*}
		\sum_{i=1}^{g(n, T)} \log \abs{\Bad_{t_i}(v_i)} + \ell_i + O(1)
		&\leq \sum_{i=1}^{g(n, T)} \paren{\log {\colorProd_i} - \frac{\Delta^{\eps/2}}{100} + \ell_i + O(1) } \tag{by the equation above} \\
		&= \paren{\sum_{i=1}^{g(n, T)} \log{\colorProd_i}+\ell_i + O(1)} - g(n, T) \cdot \frac{\Delta^{\eps/2}}{100}\\
		&=g(n, T)\cdot \Delta\log \Delta + g(n,T) \cdot O(1) - g(n, T) \cdot \frac{\Delta^{\eps/2}}{100}. \tag{as $\log \colorProd_i +\ell_i = \Delta\log\Delta$}
	\end{align*}
	The calculations above point to the exact step wherein we are ``compressing'' the randomness. 
	
	Finally, note that the compression scheme also stores $O(\log{\Delta})$ bits per call to $\fix$ and since the number of calls to $\fix$ and $\resample$ are asymptotically the same, 
	the log also contains $O(g(n,T) \cdot \log{\Delta})$ many bits for handling the calls to $\fix$. Denoting $L_T$ as the log generated by this compression scheme (and the modifications mentioned above), 
	we have, 
	\begin{align*}
		\card{L_T} &\leq T \cdot (2\log{n}+2) \tag{for logging the updates, their type, and if they continue to Line 3 of $\partialcolor$} \\
		&+ g(n, T)\cdot \Delta\log \Delta + g(n,T) \cdot O(\log{\Delta}) - g(n, T) \cdot \frac{\Delta^{\eps/2}}{100} \tag{for logging the calls to $\resample$ and $\fix$} \\
		&+ n\log{\Delta} \tag{to log the partial coloring upon halting the scheme after $S_T = g(n,T)$}.
	\end{align*}

	Consider the following scenario. We sample $x \in \set{0,1}^{a}$ for $a=g(n,T)\Delta\log{\Delta}$ uniformly at random and use it to generate all the randomness $R_T$ in the above process. 
	If the process fails because $S_T < g(n,T)$, we let $f(x)$ writes `fail' in $O(1)$ bits and otherwise, if $S_T = g(n,T)$, it will write `success' plus the log $L_T$ specified above. Thus, 
	by our earlier (contradicting) assumption, we will generate $L_T$ with probability at least $n^{-T}$, from which, by~\Cref{obs:compression-scheme}, we can correctly recover $x$. 
	Thus, by~\Cref{lem:compression}, we should have 
	\[
		\card{L_T} + O(1) \geq \card{R_T} - \log{(n^{T})}. 
	\]
	
	Plugging in the bounds we have for the lengths of $L_T$ and $R_T$ above, gives us
	\[
		T \cdot (2\log{n}+2) + g(n, T)\cdot \Delta\log \Delta + g(n,T) \cdot O(\log{\Delta}) - g(n, T) \cdot \frac{\Delta^{\eps/2}}{100} + n\log{\Delta} \geq g(n,T) \cdot \Delta\log{\Delta} - T\log{n},
	\]
	which implies that (since $\Delta^{\eps/2} \gg \log{\Delta}$ for large enough $\Delta$ as a function of $\eps$), 
	\[
		g(n,T) \leq \frac{200}{\Delta^{\eps/2}} \cdot \paren{4T\log{n} + n\log{\Delta}} < T\log{n}+n, 
	\]
	contradicting our definition of $g(n,T)$. Thus, we have 
	\[
		\Pr\paren{S_T > T\log{n}+n} \leq n^{-T}, 
	\]
	completing the proof for adaptive adversaries. 

	\paragraph{Improved bounds for oblivious adversaries.} For oblivious adversaries, the only difference is that we do not need to explicitly write down the $T$ updates in our log, since an oblivious adversary 
	behaves independent of the input. Define
	\[
		h(n,T) = T + n,
	\]
	and now suppose towards a contradiction that there exists a sequence of $T$ updates $e_1,\ldots,e_T$ such that running $\partialcolor$ on the graphs $G_0,\ldots,G_T$ leads to
	\[
		\Pr\paren{S_T \geq h(n,T)} \geq \Delta^{-T}. 
	\]
	If no such fixed sequence exists, then the bound in the lemma statement immediately holds. We now show the above equation will lead to a contradiction. 
	
	We again run the same exact compression scheme but now only spend $O(T)$ bits for updates just to write down whether or not $\partialcolor$ reached its Line 3, without specifying the updates since they are globally fixed. 
	Using the same exact argument as before, this allows us to summarize our random bits $R_T$ of length $h(n,T) \Delta\log{\Delta}$, with probability at least $\Delta^{-T}$, into a recoverable log $L_T$ of length 
	\[
		\card{L_T} = O(T) + h(n,T)\cdot \Delta\log{\Delta} + h(n,T) \cdot O(\log{\Delta}) - h(n, T) \cdot \frac{\Delta^{\eps/2}}{100} + n\log{\Delta}.
	\]
	Thus, by using~\Cref{lem:compression} as before, we have
	\[
		\card{L_T} \geq \card{R_T} - T\log{\Delta}, 
	\]
	implying that 
	\[
		h(n,T) \leq \frac{200}{\Delta^{\eps/2}} \cdot \paren{T\log{\Delta} + n\log{\Delta}} < T + n,
	\]
	contradicting with the choice of $h(n,T)$, and concluding the proof. 	
	\end{proof}

We are now ready to conclude the proof of \Cref{lem:partial}.

\begin{proof}[Proof of \Cref{lem:partial}]
	The algorithm $\partialcolor$ always produces a proper partial coloring. 
	This is because (1) if after an update $(u,v)$, if $\pcolort(u) = \pcolort(v)$ and neither are $\perp$, we will make one of them $\perp$, and (2) 
	the call $\resample(v)$ recolors the vertices in $N_t(v)$ and since the graph is triangle-free, $N_t(v)$ is an independent set, and thus each vertex $u \in N_t(v)$ can 
	independently choose a color uniformly at random from $\availt{u}\ \cup \set{\perp}$ without creating a monochromatic edge. Consequently, the resulting coloring remains proper. 
	
	By induction, suppose $G_{t-1}$ satisfies the properties stated in \Cref{lem:partial}. Our goal is to show that after processing time step $t$, the algorithm $\partialcolor$ restores all constraints. The only difference between $G_t$ and $G_{t-1}$ arises from the edge $e_t$. If the corresponding constraints are violated, the subroutine $\fix$ is invoked. By \Cref{obs:flaw-resolved}, all violated constraints are resolved by the call to $\fix$, and no new violations are introduced. Therefore, after running $\partialcolor$, for all $v \in V$ we have
\[
	a_t(v) \geq \Delta^{\eps/2}
	\qquad \text{and} \qquad
	f_t(v) \leq \frac{\eps q}{2}.
\]

	For any $T \geq n$, when facing an adaptive adversary, by \Cref{lem:upper-bound-resample} we know that with probability at least $1-n^{-T}$, 
	the subroutine \resample is invoked at most $O(T\log{n})$ time. Each call to $\resample(v)$ takes $O(\Delta^2)$ times (for sampling colors for $O(\Delta)$ vertices and going over $B_t(v,2)$ of size $\Delta^2$) and each call
	to $\fix(v)$ takes $O(\Delta)$ time (to go over all neighbors of $v$ in Line 2 if needed). Since the number of calls to $\resample$ and $\fix$ is asymptotically the same, the total runtime of the algorithm
	after $T$ updates is $O(T \cdot \Delta^2\log{n})$ with probability $1-n^{-T}$. In other words, the amortized update time of the algorithm is $O(\Delta^2\log{n})$ with probability at least $1-n^{-T}$. 
	Taking union bound for $T$ from $n$ to infinity, we have, 
	\[
		\Pr\paren{\text{amortized update time is  ever more than $O(\Delta^2\log{n})$}} \leq \sum_{T=n}^{\infty} n^{-T} = O(n^{-n}), 
	\]
	namely, with (super exponentially) high probability, the amortized update time for any sequence of updates of length at least $n$ is $O(\Delta^2\log{n})$. The bound for the recourse can also be proven similarly to be $O(\Delta\log{n})$ since each call to $\resample(v)$ has a recourse of $O(\Delta)$ at most. 
	
	Finally, switching to oblivious adversaries, using the improved bounds of \Cref{lem:upper-bound-resample}, the same argument instead gives us
	\[
		\Pr\paren{\text{amortized update time is  ever more than $O(\Delta^2)$}} \leq \sum_{T=n}^{\infty} \Delta^{-T} = O(\Delta^{-n}),
	\]
	which again implies the desired bounds with (exponentially) high probability. 
\end{proof}

\subsection{Concluding the Proof of~\Cref{thm:poly-Delta}}
We are now ready to conclude the proof of~\Cref{thm:poly-Delta}. For this, we present the following algorithm. 
Upon each update, we first invoke $\partialcolor$ to resolve any conflicts among vertices colored from $\CC_1$. Then, any vertex that remains
uncolored is assigned a color from $\CC_2$ using a simple greedy procedure. Note that
\[
\abs{\CC_1 \cup \CC_2} = (1+2\eps)\cdot \frac{\Delta}{\ln\Delta}.
\]
By initially replacing $\eps$ with $\eps/2$, this yields a $(1+\eps)\Delta/\ln \Delta$ coloring.

\begin{Algorithm}\label{alg:dynamic-main}
	Called when $e_t=(u_t, v_t)$ is updated in $G_{t-1}$ to obtain $G_t$.
	\begin{enumerate}
		\item Call $\partialcolor(e_t)$ and store all vertices whose color changes in a set $L$.
		\item For each $v \in L$ with $\pcolort(v) = \perp$, assign $v$ an available color by checking the neighborhood of $v$ and finding a color in $\CC_2$ not assigned to any of its neighbors.
	\end{enumerate}
\end{Algorithm}

It is worth noting that when running $\partialcolor$, any vertex $u$ that previously received a color from $\CC_2$ can be treated as having color $\perp$ (this way, the coloring $\pcolort$ of $\partialcolor$ does not look at 
colors of vertices in $\CC_2$, nor any vertex colored from $\CC_2$ in Line 2 of~\Cref{alg:dynamic-main} ever changes $\pcolort$). 
We now use \Cref{alg:dynamic-main} to prove \Cref{thm:poly-Delta}.

\begin{proof}[Proof of \Cref{thm:poly-Delta}]

	At each update step $t$, the subroutine $\partialcolor$ maintains a proper partial coloring using the color set $\CC_1$, and for every vertex $v$ we have $f_t(v) \leq \frac{\eps q}{2}$. Since $\abs{\CC_2} = \eps q$, there are sufficiently many available colors to greedily color all remaining uncolored vertices using $\CC_2$ in the second phase of \Cref{alg:dynamic-main}. So \Cref{alg:dynamic-main} maintains a proper $(1+\eps)\frac{\Delta}{\ln \Delta}$ coloring (after rescaling $\eps \leftarrow \eps/2$).

	The runtime of the first step of~\Cref{alg:dynamic-main} is exactly the same as that of $\partialcolor$. For the second step, the runtime is $O(\Delta)$ time per vertex in $L$. Recall that vertices in $L$ are 
	the ones whose color changed by $\partialcolor$ in this step and thus are bounded by the recourse of $\partialcolor$. This implies that the amortized update time of the algorithm, against an adaptive adversary, is 
	\[
		O(\Delta^2\log{n}) + O(\Delta\log{n}) \cdot O(\Delta) = O(\Delta^2\log{n}),
	\]
	with high probability by~\Cref{lem:partial}. The bound of $O(\Delta^2)$ on the amortized update time follows exactly the same using the improved bound of~\Cref{lem:partial} for oblivious adversaries. 
\end{proof}


\newcommand{\algA}[1]{\ensuremath{\mathbb{A}_{#1}}}
\newcommand{\partition}[2]{\ensuremath{V_{#1}^{(#2)}}}
\newcommand{\indeg}[1]{\ensuremath{\textnormal{indeg}(#1)}}
\newcommand{\colorPart}[1]{\ensuremath{\mathcal{C}^{(#1)}}}

\section{A $\Delta^{o(1)}\log{n}$ Update Time Algorithm}\label{sec:Delta-reduction}  

We now present our final algorithm  for maintaining an $O(\Delta/\ln\Delta)$ coloring of triangle-free graphs with $\Delta^{o(1)}\log{n}$ amortized update time. The following theorem formalizes \Cref{res:main}.

\begin{theorem}[Formalizing~\Cref{res:main}]\label{thm:Delta-o(1)}
	For any fixed $\gamma \in (0,1)$, there exists an integer $K = O(1/\gamma^2)$ such that the following holds. 
	There is a randomized algorithm that with high probability maintains a $\paren{K \cdot {\Delta}/{\ln \Delta}}$ coloring of any triangle-free fully dynamic graph with maximum degree $\Delta$ with amortized update time of $O(\Delta^{\gamma} \cdot \log{n})$ 
	against an adaptive adversary and $O(\Delta^\gamma)$ against an oblivious adversary. 
\end{theorem}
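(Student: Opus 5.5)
We reduce \Cref{thm:Delta-o(1)} to \Cref{thm:poly-Delta} by a recursive degree-splitting scheme in the spirit of~\cite{BhattacharyaCHN18}. Fix $b := \Delta^{\gamma/4}$ and a recursion depth $k = O(1/\gamma)$. We maintain a hierarchy of partitions: at the top level, each vertex $v$ gets a label $\ell_1(v) \in [b]$, and inside each part the induced graph is again partitioned into $b$ parts. This continues for $k$ levels, so the leaf parts have maximum degree about $\Delta_k \approx \Delta/b^k = \Delta^{\gamma/2}$.

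On each leaf part we run a separate copy of the algorithm of \Cref{thm:poly-Delta} with degree bound $2\Delta_k$ and a disjoint palette. Induced subgraphs of triangle-free graphs are triangle-free, so this is valid. The number of colors is
\[
b^k \cdot O\!\left(\frac{\Delta_k}{\ln \Delta_k}\right) = O\!\left(\frac{\Delta}{\gamma \ln \Delta}\right),
\]
since $\ln \Delta_k = \Theta(\gamma \ln \Delta)$. Up to the constant-factor loss in each level's degree bound (a factor $2^{k}$ overall), this is $K\Delta/\ln\Delta$ with $K = O(1/\gamma^2)$. The update time at the leaves is $O(\Delta_k^2 \log n) = O(\Delta^{\gamma}\log n)$ per leaf update.

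The key invariant is that at every level, every vertex $v$ has at most $2\deg_{\text{parent}}(v)/b + O(\log n)$ neighbors in its own part. With $b$ chosen so that $\Delta_i/b \gg \log n$, or by handling small-degree levels trivially, this yields the degree bound $\Delta_{i+1} \le 2\Delta_i/b$ at level $i+1$.

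We maintain the invariant lazily. When an update makes some vertex $v$ violate the invariant at level $i$, we re-sample $\ell_i(v)$ uniformly at random from $[b]$. This moves $v$ between two sub-instances, which we simulate as deleting and re-inserting all of $v$'s edges in the affected sub-instances, i.e.\ $O(\Delta_i/b)$ edge updates in the children. Any neighbor that now violates the invariant is handled recursively.

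The main obstacle is bounding the total cascade of label re-samplings against an adaptive adversary. Here we reuse the dynamic entropy compression argument of \Cref{lem:upper-bound-resample}. A single vertex re-sample fails to satisfy the invariant with probability $\exp(-\Omega(\Delta_i/b))$ by \Cref{prop:chernoff}, which allows logging a violating label in $\log b - \Omega(\Delta_i/b)$ bits plus $O(\log \Delta)$ bits of addressing. The same counting then shows there are $O(T\log n + n)$ re-samplings per level with probability $1 - n^{-T}$. The first thing to try is exactly this compression, with the flaw ``too many same-part neighbors'' in place of the flaws of \Cref{lem:prob-bound-bad-events}.

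Each re-sampling at level $i$ generates $O(\Delta_i/b)$ updates to level $i+1$. Charging these costs carefully, so that the amortized number of leaf updates per original update stays $\Delta^{O(\gamma)}$, requires choosing the per-level slack so that a vertex is re-sampled only after $\Omega(\Delta_i/b)$ updates touching it. We would impose this by using a threshold slightly above the expected count with a gap, e.g.\ re-sample only when the count exceeds $3\deg/b$ while a fresh sample satisfies it with $2\deg/b$. The paper's recursive formulation with families of algorithms $\algA{i}$ of improving update time would make this accounting clean. Multiplying across $k = O(1/\gamma)$ levels and rescaling $\gamma$ then gives the stated $O(\Delta^{\gamma}\log n)$ bound, and $O(\Delta^\gamma)$ against oblivious adversaries via the improved form of \Cref{lem:upper-bound-resample}.
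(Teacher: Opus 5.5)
Your architecture matches the paper's: a hierarchy of $b$-way partitions, vertex moves simulated as edge deletions and insertions in the child instances, and \Cref{thm:poly-Delta} run on the triangle-free leaf subgraphs with disjoint palettes. However, two steps fail as written.

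\textbf{The color count.} You allow a factor-$2$ slack in the degree bound at each level (factor $3$ in your ``gap'' variant). Over $k=\Theta(1/\gamma)$ levels the leaf degree is then $2^k\Delta/b^k$, and the total palette is $\Theta(2^k/\gamma)\cdot\Delta/\ln\Delta$. Since $2^k=2^{\Theta(1/\gamma)}$, this is not $K=O(1/\gamma^2)$. Because the depth is $\Theta(1/\gamma)$, the per-level slack must be $1+O(1/k)$. The paper uses the invariant $\mathrm{indeg}(v)\le(1+1/k)\Delta/b_k$, so the slacks multiply to $\prod_{j\le k}(1+1/j)=O(k)$. The other factor $2k-1$ is lost in $\ln$ of the base-level degree $\approx\Delta^{1/(2k-1)}$, giving $c_k=(k+1)(2k-1)\Delta/\ln\Delta$. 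Your additive $O(\log n)$ term is also a problem, since nothing relates $\Delta$ to $n$. The paper's threshold depends only on $\Delta/b_k$.

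\textbf{The bound on the cascade of moves.} This is the more serious gap. When you re-draw only $\ell_i(v)$, all neighbors' labels stay fixed. So $v$'s new same-part count is $|N(v,j)|$ for a single uniform $j\in[b]$, not a sum of independent indicators. \Cref{prop:chernoff} does not apply, and an adaptive adversary can make nearly half the parts bad for threshold $2\deg(v)/b$. A logged violating label then saves only $O(1)$ bits, against $\Theta(\log\Delta)$ addressing bits, so nothing is compressed.

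Even granting compression, the bound it yields, $O(T\log n+n)$ moves per level, is far too weak. Each move costs $\Theta(\Delta_i/b)$ child updates, so the bound allows the number of updates passed down to grow by a factor $\Omega(\log n\cdot\Delta_i/b)$ per level. What is needed is $O(k\,b/\Delta_i)$ moves per update, i.e., $O(k)$ child updates per update. Your last paragraph points at this but does not establish it. A fresh random label lands in a light part only with constant probability. Also, a vertex's count rises when neighbors move into its part, and those moves are generated by the algorithm itself.

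The paper obtains the needed bound deterministically:
\begin{itemize}
\item It moves the violating $w$ to a part $j$ with $|N(w,j)|\le\Delta/b_k$, which exists by pigeonhole.
\item It uses the potential $\Phi=\sum_v\mathrm{indeg}(v)$, which rises by at most $2$ per update and falls by at least $(2/k)\Delta/b_k$ per move.
\item This gives $O(Tkb_k/\Delta)$ moves even against an adaptive adversary, with no randomness or compression above the base case.
\end{itemize}
The recursion $T_k(\Delta)\le O(k)\,T_{k-1}(O(\Delta/b_k))+O(kb_k)$ then yields $O(k!\,\Delta^{2/(2k-1)}\log n)$.
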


To prove this theorem, we design a family of recursive algorithms $\{\algA{k}\}_{k=1}^{\infty}$ with progressively better update times. The base case of this family, $\algA{1}$, is \Cref{alg:dynamic-main} which has update time $\approx \Delta^2$. 
For larger values of $k > 1$, $\algA{k}$ achieves $\approx \Delta^{O(1/k)}$ amortized update time.

\begin{lemma}\label{lem:recursive-alg}
For any integer $k\geq 1$, there is a randomized algorithm that with high probability maintains an
$
		O(k^2 \cdot {\Delta}/{\ln{\Delta}})
	$
	coloring of any fully dynamic $n$-vertex triangle-free graph of maximum degree $\Delta$ with amortized update time: 
	\begin{itemize}
		\item $O\paren{k! \cdot \Delta^{\frac{2}{2k-1}}\cdot\log{n}}$ against an adaptive adversary; and, 
		\item $O\paren{k! \cdot \Delta^{\frac{2}{2k-1}}}$ against an oblivious adversary. 
	\end{itemize}
\end{lemma}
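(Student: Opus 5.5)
We prove the lemma by induction on $k$, building $\algA{k}$ from $\algA{k-1}$ via a random vertex partition in the spirit of~\cite{BhattacharyaCHN18}. The base case $k=1$ is exactly \Cref{thm:poly-Delta} (with, say, $\eps=1/2$): it gives an $O(\Delta/\ln\Delta)$ coloring in $O(\Delta^2\log n)$ amortized time, and $\Delta^{2/(2k-1)}=\Delta^2$ for $k=1$.

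For the inductive step, fix $k\geq 2$ and a parameter $b := \Delta^{\beta}$, with $\beta$ chosen at the end. Each vertex $v$ receives a label $p(v)\in[b]$, and we define $G^{(i)}$ to be the subgraph induced on $\partition{}{i}=\set{v: p(v)=i}$. Since each $G^{(i)}$ is an induced subgraph, it is triangle-free. We maintain the invariant that every vertex has at most $\Delta' := C\Delta/b$ neighbors in its own part, for a constant $C$. Each $G^{(i)}$ is colored with its own disjoint palette $\colorPart{i}$ by an independent copy of $\algA{k-1}$, run with degree bound $\Delta'$. The union of these colorings is proper, and the total number of colors is
\[
b\cdot O\paren{(k-1)^2\,\Delta'/\ln\Delta'} = O\paren{(k-1)^2\,\Delta/\ln\Delta'}.
\]
Because $\ln\Delta' = (1-\beta)\ln\Delta - O(1)$, this equals $O\paren{(k-1)^2/(1-\beta)}\cdot\Delta/\ln\Delta$. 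The $1/(1-\beta)$ loss is the reason for the $k^2$ factor: we will have $1-\beta\approx 1/k$-ish ratios, and accumulating them should give $O(k^2)$. The non-linearity of $\Delta/\ln\Delta$ is handled exactly by this recursive formulation.

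Maintaining the partition works as follows. When an edge insertion makes some $v$ exceed $\Delta'$ neighbors in its part, we relabel $v$ uniformly at random and repeat until the invariant holds. Moving $v$ costs $O(\Delta)$ to update the in-part degrees of its neighbors. It also causes $O(\Delta')$ edge deletions and insertions in the recursive instances, which we charge to $\algA{k-1}$ at its amortized cost $U_{k-1}(\Delta')$. A move in turn may overload neighbors in the new part, and this cascade must be controlled. We plan to control it with a potential or entropy-compression argument like \Cref{lem:upper-bound-resample}. A fresh random label overloads $v$ only if $\mathrm{Bin}(\Delta,1/b)>C\Delta/b$, which happens with probability $\exp(-\Omega(\Delta/b))$ by \Cref{prop:chernoff}. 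Recording only the successful labels then compresses the randomness, so the number of relabelings is $O(T\log n + n)$ even against an adaptive adversary. A simpler alternative to try first is a counting argument: only $O(1)$ moves are triggered per insertion, since moves triggered within the same part could be bounded by the $b$-fold slack.

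This yields the recurrence
\[
U_k(\Delta) \lesssim \Delta + \Delta'\cdot U_{k-1}(\Delta') \cdot(\text{moves per update}).
\]
Substituting $U_{k-1}(\Delta') = (k-1)!\,(\Delta')^{2/(2k-3)}\log n$ and balancing against the direct cost suggests choosing $\beta$ so that $(\Delta/b)^{1+2/(2k-3)}\approx \Delta^{2/(2k-1)}$. We expect the $O(\Delta)$ relabel-scan term to be the problematic one. To fix it, the update time must be charged only to the rare moves, each move happening once per $\approx \Delta'$ insertions amortized, or in-part degree counters must be maintained lazily. The factor $k$ in $k!$ absorbs the constants per level.

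We expect the main obstacle to be exactly this amortization of partition maintenance against an adaptive adversary, showing that relabel cascades cost amortized $\Delta^{2/(2k-1)}$ rather than $\Delta$. We would first try the entropy-compression logging of \Cref{lem:upper-bound-resample}, where each relabel is logged by its $O(\log\Delta)$-bit local address and the bad labels are compressed. We would then tune $\beta$ so that the exponents telescope to $2/(2k-1)$.
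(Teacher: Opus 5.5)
\textbf{Gap in the update-time analysis.} Your architecture matches the paper's: a partition into $b$ parts, disjoint palettes, and one copy of $\algA{k-1}$ per part. However, the step that actually carries the lemma is left open. You correctly flag it as the main obstacle, but the tools you propose would not close it.

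Each move of a vertex costs $\Theta(\Delta)$ to rescan its neighbours, plus $\Theta(\Delta/b)$ recursive edge updates. So you need the number of moves over $T$ updates to be $O(Tkb/\Delta)$, i.e., one move per roughly $\Delta/(kb)$ updates.

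\begin{itemize}
\item Your entropy-compression count of $O(T\log n+n)$ relabelings permits $\Omega(T)$ moves, and so does the fallback of ``$O(1)$ moves per insertion''. Then the rescan alone costs order $\Delta$ (or $\Delta\log n$) amortized per update. That is not $O(k!\,\Delta^{2/(2k-1)})$, nor its $\log n$ version.
\item Random relabeling ``until the invariant holds'' gives no guaranteed progress either. The vertex may land with in-part degree just below $C\Delta/b$, lowering $\sum_v\indeg{v}$ by as little as $2$. Its arrival may also overload new neighbours, so nothing controls the cascade.
\end{itemize}

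The missing idea is the paper's deterministic potential argument:
\begin{itemize}
\item Move a violating $w$ to a part $j$ with $|N(w,j)|\le \Delta/b$, which exists by pigeonhole.
\item Track $\Phi=\sum_v \indeg{v}$. An update raises $\Phi$ by at most $2$.
\item With threshold $(1+1/k)\Delta/b$, each move lowers $\Phi$ by at least $\frac{2}{k}\cdot\frac{\Delta}{b}$.
\end{itemize}
Hence there are at most $Tkb/\Delta$ moves, cascades included, and no randomness or adaptivity argument is needed at this level. The amortized cost is then $U_{k-1}(O(\Delta/b))$ for forwarding each update, a term your recurrence omits. Moves add $O(kb)+O(k)\,U_{k-1}(O(\Delta/b))$. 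Balancing $b$ against $U_{k-1}(\Delta/b)$ gives $b=\Theta(\Delta^{2/(2k-1)})$. This differs from what your balance $(\Delta/b)^{1+2/(2k-3)}\approx \Delta^{2/(2k-1)}$ yields, because that balance comes from the incorrect ``$O(1)$ moves per update'' recurrence.

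\textbf{Gap in the color count.} A constant slack $C>1$ at every level uses $b\cdot f(k-1)\cdot\frac{C\Delta/b}{\ln(C\Delta/b)}\ge C\,f(k-1)\,\Delta/\ln\Delta$ colors. This forces $f(k)\ge C^{k-1}f(1)$, which is exponential in $k$, not $O(k^2)$. Also, carrying ``$O((k-1)^2)$'' with hidden constants through an induction on $k$ is not valid.

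The paper instead uses slack $1+1/k$ at level $k$, so the slacks multiply to $\prod_{j=2}^{k}(1+1/j)=(k+1)/2$. It also uses the explicit palette size $c_k=(k+1)(2k-1)\Delta/\ln\Delta$. Here the factor $2k-1=\prod_{j=2}^{k}\frac{2j-1}{2j-3}$ absorbs the per-level losses $\ln\Delta/\ln(\Delta/b_j)$.

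The same $1+1/k$ slack makes the potential drop per move $\Theta(\Delta/(kb))$. That is exactly where the extra factor $k$ in $k!$ comes from.
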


\Cref{thm:Delta-o(1)} follows from~\Cref{lem:recursive-alg} immediately by setting $k=\Theta(1/\gamma)$ and $K = O(k^2)$ (and noting that $\gamma$ is fixed independent of $n$ and thus the $k!$ term in the update time is suppressed  
in the asymptotic notation.) Note that in the limit, one can set $k$ as large as some $k = o(\sqrt{\log{\Delta}})$ to maintain an $o(\Delta)$ coloring of triangle-free graphs in ${\Delta^{\Theta(\frac{\log\log{\Delta}}{\sqrt{\log{\Delta}}})}\log{n}}$ amortized update 
time. We prove~\Cref{lem:recursive-alg} in the rest of this section.

For any integer $k \geq 1$, algorithm $\algA{k}$ receives an upper bound $\Delta$ on the maximum degree of the graph it will ever encounter, and the palette $\CC$ of the available colors in the \textbf{initialization phase}. 
Then, during the \textbf{update phase}, $\algA{k}$ receives updates of the form $(u,v,\pm)$, for $u,v \in V$, which corresponds to inserting or deleting the edge $(u,v)$ from the underlying graph.  

The algorithm $\algA{1}$ is our algorithm in~\Cref{thm:poly-Delta}. We now describe $\algA{k}$ for any $k > 1$ on an input dynamic graph $G$ on vertices $V$ with maximum degree $\Delta$. Define the following two parameters 
\begin{align}
	b_k = b_k(\Delta):= \frac{1}{2} \cdot \Delta^{\frac{2}{2k-1}} \qquad \text{and} \qquad c_k = c_k(\Delta) := (k+1) \cdot (2k-1) \cdot\frac{\Delta}{\ln \Delta}. \label{eq:para-o(1)}
\end{align}
We require the palette $\CC$ of available colors to $\algA{k}$ to be of size $c_k$. This palette is further partitioned arbitrarily into $b_k$ sets $\CC_1,\cdots,\CC_{b_k}$ of equal size. 

For every vertex $v \in V$, the algorithm maintains the following data
structures that is updated in each time step:
\begin{itemize}
    \item A partition of the vertex set into $b_k$ sets $V_1,\ldots,V_{b_k}$. For every vertex $v \in V$, we use $p(v) \in [b_k]$ to denote the index of the partition it belongs to. Initially, vertices are  partitioned arbitrarily. 
      \item $\indeg{v}$: the degree of $v$ in the \emph{induced} subgraph of $G$ on $V_{p(v)}$, namely, the partition $v$ belongs to. We refer to $\indeg{v}$ as the \textbf{induced degree} of $v$. 
    \item $N(v,i)$: the neighbors of $v$ inside the induced subgraph on $V_i$, denoted by $G[V_i]$. 
\end{itemize}
The goal of the algorithm is to color vertices in each set $V_i$ using only the colors in $\CC_i$. To ensure possibility of this, we maintain an invariant on the maximum degree of each $G[V_i]$ for $i \in [b_k]$. 

\begin{invariant}\label{inv:degree}
For any vertex $v \in V$, $\indeg{v} \leq \paren{1+\frac{1}{k}} \cdot \frac{\Delta}{b_k}$. 
\end{invariant}

Equipped with~\Cref{inv:degree}, the algorithm $\algA{k}$ runs $b_k$ copies of the algorithm $\algA{k-1}$, one on each subgraph $G[V_i]$ and color palette $\CC_i$ for $i \in [b_k]$ with an upper bound of $\Delta(G[V_i]) \leq (1+1/k)\Delta/b_k$ on the maximum degree of the subgraph. Since the palettes $\CC_i$'s are disjoint, this will lead to a proper coloring of the entire graph. We can now formalize the algorithm. 

\newcommand{\algAcopy}[2]{\ensuremath{\mathcal{A}_{#1}^{(#2)}}}
\newcommand{\fixUpdate}{\ensuremath{\textnormal{\textsc{FixUpdate}}}\xspace}
\newcommand{\fixInv}{\ensuremath{\textnormal{\textsc{FixInvariant}}}\xspace}
\newcommand{\Time}[1]{\ensuremath{T_{#1}}}

\begin{Algorithm}[Algorithm $\algA{k}$ of \Cref{lem:recursive-alg}]\label{alg:algk}
    
    ~
    \vspace{-5pt}
    \begin{enumerate}
    	\item \textbf{Initialization phase:} Initialize $b_k$ copies of $\algA{k-1}$, where the $i$-th copy, denoted by $\algA{k-1}[i]$ is given the palette $\CC_i$ and the promised upper bound on maximum degree $(1+1/k)\Delta/b_k$. 
	During the updates, $\algA{k-1}[i]$ will be given the dynamic graph $G_i = (V , E[V_i])$, the graph with vertices $V$, but only edges 
	that are between vertices of the set $V_i$ of the partition\footnote{Setting vertices of $G_i$ to be $V$ instead of $V_i$ is simply to bypass having to define operations for inserting or deleting vertices in addition to edges.}.    
	
	\item \textbf{Update phase:} Given an update $(u_t,v_t,\pm)$ to $G$ at time $t$: 
	\begin{enumerate}
		\item Update  $N(u_t, p(v_t)),N(v_t, p(u_t))$ and $\indeg{u_t},\indeg{v_t}$ accordingly. 
		\item If~\Cref{inv:degree} no longer holds, run the subroutine $\fixInv$. 
		\item Once~\Cref{inv:degree} continues to hold, if $p(u_t) \neq p(v_t)$, terminate, otherwise pass the update $(u_t,v_t,\pm)$ to $\algA{k-1}[i]$, where $i=p(u_t)=p(v_t)$,
		 to update the colors. 
	\end{enumerate}
   \end{enumerate}
   \noindent
    \underline{\textbf{Subroutine} $\fixInv$:}\label{subroutine:fix-update}
    \begin{enumerate}
    	\item Let $w$ be any vertex violating the invariant and let $i = p(w)$. 
	\item Remove all edges of $w$ in $G_i$ (i.e., inside $E[V_i]$) by calling $\algA{k-1}[i]$ for each edge. 
	\item Find an index $j \in [b_{k+1}]$ such that $\card{N(w, j)} \leq \Delta/b_k$ (such an index exist by pigeonhole principle). Remove $w$ from $V_i$ and insert it to $V_j$ and update $p(w)$. 
	\item Insert all edges of $w$ incident on $E[V_j]$ to the graph $G_j$ by calling $\algA{k-1}[j]$ for each edge. 
	\item If~\Cref{inv:degree} still does not hold, call $\fixInv$ again. 
    \end{enumerate}
\end{Algorithm}

We use a simple potential function argument to argue that the number of calls to $\fixInv$ (including recursive calls inside it) across a fixed number of updates will be bounded. We will use this later both in the proof of the correctness of the algorithm
and to bound its runtime. 

\begin{lemma}\label{lem:bounded-fix-inv}
	Starting from an empty graph $G$ as input to $\algA{k}$, over any sequence of $T$ updates (even by an adaptive adversary), the total number of times $\fixInv$ is called is $O(T \cdot k \cdot b_k/\Delta)$.  
\end{lemma}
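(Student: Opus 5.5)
We use a potential function over the current partition. Let
\[
\Phi \;:=\; \sum_{v \in V} \indeg{v} \;=\; 2\sum_{i=1}^{b_k} \card{E[V_i]},
\]
which is twice the number of monochromatic edges, i.e., edges with both endpoints in the same part. The plan is to show two things. First, each update raises $\Phi$ by at most $2$. Second, each call to $\fixInv$ lowers $\Phi$ by $\Omega(\Delta/(k\, b_k))$. Since $\Phi \geq 0$ and $\Phi = 0$ at the start (the graph is empty), the number of calls after $T$ updates is at most $2T \big/ \Omega(\Delta/(k b_k)) = O(T \cdot k \cdot b_k/\Delta)$. This bound uses no randomness, so it holds even against an adaptive adversary.

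The first claim is immediate. Inserting or deleting one edge $(u_t,v_t)$ changes $\indeg{\cdot}$ only for $u_t$ and $v_t$, and only by $1$ each, and only if $p(u_t)=p(v_t)$. Deletions can only decrease $\Phi$.

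For the second claim, consider one call to $\fixInv$ on a vertex $w$ with $i=p(w)$. At that moment $\card{N(w,i)} = \indeg{w} > (1+1/k)\Delta/b_k$. Moving $w$ from $V_i$ to $V_j$ removes the edges between $w$ and $N(w,i)$ from the induced subgraphs. Each such edge contributes $2$ to $\Phi$, so this subtracts $2\card{N(w,i)}$. It then adds the edges between $w$ and $N(w,j)$, contributing $2\card{N(w,j)} \leq 2\Delta/b_k$. No other edge changes whether it lies inside a part. The net change is therefore
\[
\Phi_{\text{new}} - \Phi_{\text{old}} \;\leq\; 2\frac{\Delta}{b_k} - 2\paren{1+\frac1k}\frac{\Delta}{b_k} \;=\; -\frac{2\Delta}{k\,b_k}.
\]
We must check that the index $j$ exists. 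The neighbors of $w$ are split among $b_k$ parts, so $\sum_j \card{N(w,j)} = \deg(w) \leq \Delta$, and by averaging some $j$ has $\card{N(w,j)} \leq \Delta/b_k$. Since $\card{N(w,i)} > \Delta/b_k$, this $j$ differs from $i$, so the move is genuine. (The algorithm's ``$[b_{k+1}]$'' should read $[b_k]$.) Each recursive call to $\fixInv$ is again triggered by a violating vertex, so the same decrease applies to every call, recursive ones included.

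Combining the two claims: after $T$ updates, $0 \leq \Phi \leq 2T - (\#\text{calls})\cdot 2\Delta/(k b_k)$, which gives $\#\text{calls} \leq T k b_k/\Delta$. This argument also shows that $\fixInv$ terminates after each update, since $\Phi$ strictly decreases while staying nonnegative. The only subtle point is accounting correctly for edges relative to the current partition. Deletions of edges of $w$ inside $G_i$ and insertions inside $G_j$ that are passed to the copies of $\algA{k-1}$ are bookkeeping only and do not affect $\Phi$ beyond the computation above. I do not expect a real obstacle here.
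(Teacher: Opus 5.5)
Your proposal is correct and matches the paper's proof: the same potential $\Phi=\sum_{v}\indeg{v}$, an increase of at most $2$ per update, and a decrease of at least $2\Delta/(k\,b_k)$ per call to $\fixInv$, giving at most $T k b_k/\Delta$ calls. Your extra checks (that the pigeonhole index $j$ differs from $i$, the $[b_{k+1}]$ typo, and termination of $\fixInv$) are sound and do not change the argument.
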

\begin{proof}
	Define the potential function: 
	\[
		\Phi := \sum_{v \in V} \indeg{v}. 
	\]
	Each update can increase $\Phi$ by at most two. On the other hand, whenever $\fixInv$ is called, one vertex $w$ with $\indeg{w} > (1+1/k) \cdot \Delta/b_k$ moves from some set $V_i$ to another $V_j$ resulting in $\indeg{w} \leq \Delta/b_k$. 
	Removal of $w$ from $V_i$ reduces $\indeg{w}$ and induced degrees of neighbors in $V_i$ by at least $2(1+1/k)\Delta/b_k$ in total. On the other hand, including $w$ in $V_j$ increases the induced degrees of vertices 
	by at most $2\Delta/b_k$. As such, this move reduces $\Phi$ by at least $2/k \cdot \Delta/b_k$. 
	
	Thus, over a sequence of $T$ updates, we increase $\Phi$ to at most $2T$ and $\Phi$ can never become negative, thus there can be at most 
	\[
		\frac{2T}{2/k \cdot \Delta/b_k} = \frac{T \cdot k \cdot b_k}{\Delta},
	\]
	many calls to $\fixInv$, as desired.  
\end{proof}

\Cref{lem:bounded-fix-inv} ensures that the subroutine $\fixInv$ terminates in $\algA{k}$ and thus we can safely assume~\Cref{inv:degree} when running the algorithm. Using this, we can prove the correctness of the 
algorithm. Before that, we should note while the implementation of $\algA{k}$ for $k > 1$ in~\Cref{alg:algk} is deterministic, the base case algorithm, namely, $\algA{1}$ from~\Cref{thm:poly-Delta}, is a randomized algorithm. Thus, each of $\algA{k}$ 
is also a randomized algorithm.

\begin{lemma}\label{lem:coloring-invariant}
With high probability, $\algA{k}$ computes a proper coloring of the input graph with $c_k$ colors. 
\end{lemma}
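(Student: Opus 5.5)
We prove \Cref{lem:coloring-invariant} by induction on $k$, with the stronger hypothesis that $\algA{k}$, given palette $\CC$ of size $c_k(\Delta)$ and a graph whose maximum degree stays at most $\Delta$ throughout, maintains a proper coloring of it using only colors of $\CC$.

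\textbf{Base case $k=1$.} Here $\algA{1}$ is \Cref{alg:dynamic-main}, and \Cref{thm:poly-Delta} (with, say, $\eps=1/2$) gives a proper $(1+\eps)\Delta/\ln\Delta$ coloring with high probability. This is at most $c_1=2\Delta/\ln\Delta$ colors. If $\Delta$ is below the threshold $\Delta_0$ of \Cref{thm:poly-Delta}, we instead use the trivial greedy $(\Delta+1)$-coloring. This fits in $O(\Delta/\ln\Delta)$ colors because $\Delta_0$ is a constant, and we absorb the difference into the constants.

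\textbf{Inductive step, degree control.} Fix $k>1$. By \Cref{lem:bounded-fix-inv}, $\fixInv$ terminates after finitely many calls. So at the moment step 2(c) of $\algA{k}$ forwards an update, \Cref{inv:degree} holds. We must also check that no copy $\algA{k-1}[i]$ ever holds a graph of degree larger than its promised bound $\Delta' := (1+1/k)\Delta/b_k$, even at intermediate moments. We check the operations in turn.
\begin{itemize}
\item Deletions only lower degrees.
\item Inside $\fixInv$, the edges inserted into $G_j$ are those from $w$ to $N(w,j)$, and $|N(w,j)|\le \Delta/b_k$.
\item An insertion forwarded in step 2(c) happens only after the invariant is restored, so it never pushes a degree in $G_i$ above $\Delta'$.
\item One subtlety: in step 2(a) the induced degree is updated before forwarding, and $\fixInv$ may move $u_t$ or $v_t$ away. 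Then the edge is no longer inside one part and is simply not forwarded. If the edge is forwarded, both endpoints have degree at most $\Delta'$ in $G_i$ including it.
\end{itemize}
Hence each copy always receives a triangle-free graph (a subgraph of $G$) of maximum degree at most $\Delta'$.

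\textbf{Inductive step, palette size.} We verify that $|\CC_i| = c_k/b_k \ge c_{k-1}(\Delta')$, so the induction hypothesis applies to each copy. Compute
\[
c_{k-1}(\Delta') = k(2k-3)\frac{\Delta'}{\ln\Delta'} = k(2k-3)\Bigl(1+\frac1k\Bigr)\frac{\Delta}{b_k\ln\Delta'}.
\]
Since $b_k=\tfrac12\Delta^{2/(2k-1)}$, we get $\ln\Delta' \ge \ln\Delta - \ln b_k \ge \bigl(1-\tfrac{2}{2k-1}\bigr)\ln\Delta = \tfrac{2k-3}{2k-1}\ln\Delta$. Substituting,
\[
c_{k-1}(\Delta') \le (k+1)(2k-3)\cdot\frac{2k-1}{2k-3}\cdot\frac{\Delta}{b_k\ln\Delta} = \frac{c_k}{b_k}.
\]
This is exactly what the choice of $c_k$ in \Cref{eq:para-o(1)} was designed to give. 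Lower-order slack, for example from $\Delta'$ possibly falling below $\Delta_0$ or from rounding $b_k$, needs care; I expect to absorb it by the $O(\cdot)$ in $c_k$.

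\textbf{Conclusion and probability.} Each copy colors its part properly with colors from $\CC_i$, and the palettes are disjoint. Every edge of $G$ either lies inside some $V_i$, and is properly colored by that copy, or joins two different parts, and so gets two different colors. Thus the union is a proper coloring of $G$ with $c_k$ colors.

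The failure probability is a union bound over all base-case copies. There are at most $\prod_j b_j \le \poly(\Delta) \le \poly(n)$ of them, and each fails with probability $n^{-\Omega(n)}$ or $\Delta^{-\Omega(n)}$, as shown in the proof of \Cref{lem:partial}.

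I expect the main obstacle to be this union bound, together with the adaptivity issue. The updates fed to the inner copies are chosen by $\algA{k}$ in response to an adaptive adversary, so they are themselves adaptive. Each base copy still faces an adaptive adversary, however, and \Cref{lem:partial} holds against such adversaries, so the composition should go through.
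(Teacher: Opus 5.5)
Your palette computation, showing $c_{k-1}(\Delta')\le c_k/b_k$, is the same as the paper's, and so is the disjoint-palette argument. The degree-control step, however, has a genuine gap.

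Your second bullet bounds only the new induced degree of the moved vertex $w$, via $|N(w,j)|\le \Delta/b_k$. It says nothing about the vertices $x\in N(w,j)$, each of which gains an edge in $G_j$. Nothing prevents such an $x$ from already having induced degree $\lfloor \Delta'\rfloor$, where $\Delta'=(1+1/k)\Delta/b_k$. Nor does anything prevent $x$ from being a violator not yet processed: \textsc{FixInvariant} picks an arbitrary violating vertex, so other vertices can be moved into $x$'s part before $x$ itself is handled. Step 4 of \textsc{FixInvariant} forwards these insertions to $\mathbb{A}_{k-1}[j]$ immediately. That copy is therefore handed a graph whose maximum degree exceeds its promised bound, and the partition is repaired only afterwards by the recursive call in Step 5. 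So your conclusion that every copy always sees maximum degree at most $\Delta'$ does not follow, and can fail for the algorithm as literally written. Your induction hypothesis requires the degree bound to hold throughout, so it cannot be invoked at this point.

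The missing idea is to buffer the updates generated by \textsc{FixInvariant}, and this is exactly the point the paper's proof makes. Collect all edge insertions and deletions produced by the whole cascade of \textsc{FixInvariant} calls triggered by one adversarial update. Forward them to the copies only after \textsc{FixInvariant} has terminated, that is, once \Cref{inv:degree} holds for every vertex. To make this airtight, cancel any edge that is inserted and later deleted within the batch, and apply the net deletions before the net insertions. Then every intermediate graph given to $\mathbb{A}_{k-1}[i]$ is a subgraph of either the old or the new $G[V_i]$, and both have maximum degree at most $\Delta'$. Buffering forwards no more updates than before, so the runtime analysis is unaffected. With this fix in place, the rest of your argument goes through as in the paper: the base case, the palette bound, the union over disjoint palettes, and the union bound over copies.
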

\begin{proof}
	Firstly, consider the dynamic graph $G_i=(V,E[V_i])$ defined throughout the updates. This graph is updated through direct adversary updates to $G$ plus updates done in $\fixInv$. 
	Here we should note that to accurately implement $\fixInv$, we should technically first collect
	all insertion and deletion of edges decided across all calls of $\fixInv$ inside a single original update to the graph, and then, once $\fixInv$ is entirely finished do all the insertions and deletions at the end. This way, 
	(1) all updates to the dynamic graph $G_i$ are passed to it via $\algA{k-1}[i]$, and (2) we can assume an upper bound of $(1+1/k)\Delta/b_k$ on the maximum degree of graph $G_i$. This means all calls to $\algA{k-1}[i]$ 
	for $i \in [b_k]$ are valid throughout the execution of $\algA{k}$.

	We know that~\Cref{inv:degree} holds and for any $i \in [k]$, $\algA{k-1}[i]$ is run entirely on the dynamic graph $G_i = (V,E[V_i])$ with maximum degree $(1+1/k)\Delta/b_k$. 
	By~\Cref{eq:para-o(1)}, this means that $\algA{k-1}[i]$ requires a palette of size 
	\begin{align*}
		c_{k-1}(\frac{(1+1/k) \cdot \Delta}{b_k}) &= k \cdot (2k-3) \cdot \frac{(1+1/k) \cdot \Delta}{b_k \cdot \ln{((1+1/k) \cdot \Delta/b_k)}} \tag{by~\Cref{eq:para-o(1)} for $c_{k-1}$} \\
		&=  (k+1) \cdot (2k-3) \cdot \frac{\Delta}{b_k \cdot \ln{(2 \cdot (1+1/k) \cdot \Delta^{\frac{2k-3}{2k-1}}})} \tag{by~\Cref{eq:para-o(1)} for $b_k$} \\
		&\leq (k+1) \cdot (2k-3) \cdot \frac{2k-1}{2k-3}	\cdot \frac{\Delta}{b_k \cdot \ln{\Delta}} \\
		&= \frac{c_k}{b_k} \tag{by~\Cref{eq:para-o(1)} for $c_{k}$}. 
	\end{align*}
	This is precisely the size of the palette $\CC_i$ provided to $\algA{k-1}[i]$. Thus, as long as $\algA{k-1}[i]$, for each $i \in [b_k]$, works correctly, so does $\algA{k}$. 
	Finally, since $V_1,\ldots,V_{b_k}$ is a partition of $V$ and we are using disjoint palettes for each $V_i$, the union of colorings found by $\algA{k-1}[i]$'s is also a proper
	coloring of the entire graph. This concludes the proof. 
\end{proof}

Finally, we can analyze the update time of the algorithm. In the following, we focus on adaptive adversaries first and then point out how to improve the runtime for oblivious adversaries. For any $k \geq 1$, define: 
\begin{center}
	$\Time{k}(\Delta):$ a high probability upper bound on the amortized update time of $\algA{k}$ \\ on a graph of maximum degree $\Delta$ against an adaptive adversary. 
\end{center}

We prove the following upper bound on $\Time{k}(\cdot)$. 

\begin{lemma}\label{lem:time-k}
	For any $k \geq 1$, 
	\[
		\Time{k}(\Delta) = O(k! \cdot \Delta^{\frac{2}{2k-1}}\cdot\log{n}). 
	\]
\end{lemma}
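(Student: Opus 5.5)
We argue by induction on $k$. The base case $k=1$ is exactly \Cref{thm:poly-Delta}: $\algA{1}$ has amortized update time $O(\Delta^2\log n) = O(1!\cdot \Delta^{2/(2\cdot 1-1)}\log n)$ with high probability. For $k>1$ we assume $\Time{k-1}(\Delta') = O((k-1)!\cdot \Delta'^{2/(2k-3)}\log n)$ for every degree bound $\Delta'$. We then bound the total work of $\algA{k}$ over $T$ updates by charging it to two sources.

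The first source is the direct work per adversarial update. Updating $N(\cdot,\cdot)$ and $\indeg{\cdot}$ costs $O(1)$. If the edge lies inside some $V_i$, we forward one update to $\algA{k-1}[i]$.

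The second source is the work inside $\fixInv$. Each call finds a violating vertex; we maintain a list of violators, so this is $O(1)$. It then scans $w$'s neighborhood to compute $|N(w,j)|$ for all $j$ and choose the target part, which costs $O(\Delta + b_k)=O(\Delta)$. It updates the $N(\cdot,\cdot)$ structures of $w$'s neighbors in $O(\Delta)$. Finally, it issues at most $\indeg{w}\le (1+1/k)\Delta/b_k+1$ deletions to $\algA{k-1}[i]$ and at most $\Delta/b_k$ insertions to $\algA{k-1}[j]$. By \Cref{lem:bounded-fix-inv}, there are $O(Tkb_k/\Delta)$ calls to $\fixInv$. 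Their direct work is therefore $O(Tk b_k)$, and they generate $O(Tk b_k/\Delta)\cdot O(\Delta/b_k) = O(Tk)$ updates to the lower-level copies. Adding the at most $T$ forwarded adversarial updates, the recursive copies receive $O(kT)$ updates in total.

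Each copy $\algA{k-1}[i]$ runs on a graph of maximum degree $\Delta' = (1+1/k)\Delta/b_k = O(\Delta^{(2k-3)/(2k-1)})$. Its amortized cost per update is therefore $\Time{k-1}(\Delta') = O((k-1)!\cdot \Delta'^{2/(2k-3)}\log n) = O((k-1)!\cdot\Delta^{2/(2k-1)}\log n)$, since $(1+1/k)^{2/(2k-3)}$ is bounded by a constant. Summing over copies, the recursive work is $O(kT)\cdot O((k-1)!\,\Delta^{2/(2k-1)}\log n)= O(k!\cdot T\Delta^{2/(2k-1)}\log n)$. The direct work is $O(T + Tkb_k) = O(Tk\Delta^{2/(2k-1)})$, which is dominated. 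Dividing by $T$ gives the claim. We must track the hidden constant carefully so it does not compound across levels. The factor $k$ multiplying the number of sub-updates is exactly what produces $k!$; the remaining absolute constants should be absorbed by stating the induction hypothesis with one fixed constant and checking that the recursion multiplies it only by $k$.

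The main obstacle is making the amortization of the subinstances valid. Each copy's guarantee holds only amortized over its own update sequence, and only with high probability after at least $n$ updates (\Cref{lem:partial} assumes $\ge n$ updates). Some copies may receive fewer than $n$ updates, and the per-copy bounds must be combined by a union bound. To handle this, we use the stronger form actually proven: with probability $1-n^{-T'}$, the total work on $T'$ updates is $O(T'\cdot \text{rate} + n\cdot\text{poly}(\Delta'))$, where the additive $n$ term comes from $g(n,T)=T\log n+n$. We then sum the totals over the $b_k$ copies and absorb the additive terms by the assumption that the number of updates is $\Omega(n)$, possibly at the cost of replacing the additive $n$ by an $n$-term paid once per level. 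A union bound over the $\le n$ nonempty copies and over levels keeps the failure probability polynomially small.

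Adaptivity also needs care: the updates a copy receives depend on the partition, which depends on the adversary and lower-level outputs. This is fine because \Cref{lem:partial} holds against adaptive adversaries, and the composed environment of $\algA{k}$ together with the outer adversary is itself an adaptive adversary for each copy.
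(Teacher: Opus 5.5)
Your main argument is the paper's proof. It runs by induction on $k$ from \Cref{thm:poly-Delta} and uses \Cref{lem:bounded-fix-inv} to get $O(Tkb_k/\Delta)$ calls to \fixInv. Each call does $O(\Delta)$ direct work and forwards $O(\Delta/b_k)$ edge updates, hence $O(kb_k)$ amortized direct work plus $O(k)$ amortized sub-updates, each charged $\Time{k-1}(O(\Delta/b_k))=O((k-1)!\,\Delta^{2/(2k-1)}\log n)$.

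Where you go beyond the paper, the fix you propose does not work as written. You sum per-copy bounds $O(T_i'\cdot\mathrm{rate}+n\cdot\poly(\Delta'))$ over the $b_k$ copies, and recursively over all copies of $\algA{1}$. This leaves an additive term of (number of copies)$\cdot n\cdot\poly(\Delta')$, which $T=\Omega(n)$ does not absorb. Already for $k=2$, with $\Delta_1=(3/2)\Delta/b_2=3\Delta^{1/3}$, the resulting bound amortizes to about $b_2\Delta_1^2=\Theta(\Delta^{4/3})$ rather than $\Delta^{2/3}\log n$. The ``$n$-term paid once per level'' alternative is asserted, not argued.

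What does work is a change inside the compression of \Cref{lem:upper-bound-resample}. A copy only ever recolors endpoints of the updates it receives, so only those (at most $2T_i'$) vertices need their final color logged. This replaces the $n\log\Delta$ term by $O(T_i'\log\Delta')$ and removes both the additive term and the per-copy ``at least $n$ updates'' requirement. The paper's proof does not address this point either; it applies $\Time{k-1}$ per copy and only appeals to a union bound. So this is not something you are missing relative to the paper.

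There are two smaller slips, also shared with the paper. First, $\indeg{w}$ can exceed $(1+1/k)\Delta/b_k+1$ during a cascade of \fixInv calls. The potential $\Phi$ still bounds total deletions by $T$ plus total insertions, so the $O(kT)$ count of sub-updates survives. Second, the recursion multiplies the constant by roughly $2(k+1)\cdot(2(1+1/k))^{2/(2k-3)}$, not by $k$. Honest tracking therefore gives $2^{O(k)}\cdot k!$, which is harmless for how the lemma is used.
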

\begin{proof}
	By~\Cref{thm:poly-Delta}, for $\algA{1}$, we have 
	\[
		\Time{1}(\Delta) = O(\Delta^2\log{n}), 
	\]
	which fits the desired bounds for $k=1$. This establishes the base case. We now prove the bounds for any $k > 1$ by analyzing~\Cref{alg:algk}. 
	
	The amortized update time of~\Cref{alg:algk}, ignoring the time to call $\fixInv$ for now, is $O(1)$ for updating the data structures, 
	plus running one copy of $\algA{k-1}$ on a graph of maximum degree $O(\Delta/b_k)$. By induction, and the choice of $b_k$ in~\Cref{eq:para-o(1)}, this takes 
	\[
		\Time{k-1}\paren{O(\frac{\Delta}{b_k})} = \Time{k-1}\paren{O(\Delta^{\frac{2k-3}{2k-1}})} = O((k-1)! \cdot \Delta^{\frac{2}{2k-1}}\cdot\log{n}). 
	\]
	
	We now bound the runtime of the calls to $\fixInv$. Firstly, in each call, it takes $O(\Delta)$ time to move the considered vertex $w$ from one set $V_i$ to another set $V_j$. 
	Moreover, it takes another $O(\Delta/b_k)$ many calls to $\algA{k-1}$ to remove and insert edges of $w$. 	Finally, to implement $\fixInv$ efficiently, we can keep a list of vertices whose degree have changed in each call to $\fixInv$ (when we update the edge
	incident on that vertex) so that we can find a vertex $w$ that violates \Cref{inv:degree} (if any) in $O(1)$ time without having 
	to search the entire graph. 
	
	Thus, a single call to $\fixInv$ takes 
	\[
		O(\Delta) + O(\Delta/b_k) \cdot \Time{k-1}\paren{O(\frac{\Delta}{b_k})}
	\]
	time in total. Moreover, by~\Cref{lem:bounded-fix-inv}, over a sequence of $T$ updates, the total number of calls to $\fixInv$ is $O(k \cdot T \cdot b_k/\Delta)$. Thus, the amortized update time of this step will be 
	\begin{align*}
		O(k \cdot b_k) + O(k) \cdot \Time{k-1}\paren{O(\frac{\Delta}{b_k})} &= O(k \cdot \Delta^{\frac{2}{2k-1}}) + O(k) \cdot O((k-1)! \cdot \Delta^{\frac{2}{2k-1}}\cdot\log{n}) \\
		&= O(k! \cdot \Delta^{\frac{2}{2k-1}}\cdot\log{n}), 
	\end{align*}
	concluding the proof. 
\end{proof}

\Cref{lem:coloring-invariant,lem:time-k} together conclude the proof of~\Cref{lem:recursive-alg} for the adaptive strategies. We should note that since over a sequence of $T$ updates, 
we are running $T \cdot \poly(n)$ many copies of $\algA{1}$ whereas error probability of $\algA{1}$ is only $n^{-T}$, we can do a union bound over all probabilistic guarantees required in our proof. 

Finally, to prove~\Cref{lem:recursive-alg} for oblivious adversaries, notice that by~\Cref{thm:poly-Delta}, against an oblivious adversary, $\algA{1}$ has an amortized update time of $O(\Delta^2)$ instead. Thus, the same proof of~\Cref{lem:time-k}, 
with the base case of $O(\Delta^2\log{n})$ switched with $O(\Delta^2)$, implies that the amortized update time of $\algA{k}$ has no dependence on $\log{n}$ and is $O(k! \cdot \Delta^{\frac{2}{2k-1}})$ time instead. 

This fully concludes the proof of~\Cref{lem:recursive-alg} and by extension~\Cref{thm:Delta-o(1)}.

\section*{Acknowledgement} 
\addcontentsline{toc}{section}{Acknowledgement}

We are thankful to Thatchaphol Saranurak for illuminating discussions and pointing us to the proactive resampling technique in~\cite{BernsteinBGNSS022,BhattacharyaSS22} and Bernhard Haeupler for discussions about his and collaborators recent work in~\cite{HaeuplerP26,HaeuplerMRST26} and coordinating 
our arXiv submissions. 

\bibliographystyle{alpha}
\bibliography{general}

@article{Reed98,
  title={{$\omega$}, {$\Delta$}, and {$\chi$}},
  author={Reed, Bruce},
  journal={Journal of Graph Theory},
  volume={27},
  number={4},
  pages={177--212},
  year={1998},
  publisher={Wiley Online Library}
}

@inproceedings{AchlioptasIS19,
  author       = {Dimitris Achlioptas and
                  Fotis Iliopoulos and
                  Alistair Sinclair},
  editor       = {David Zuckerman},
  title        = {Beyond the Lov{\'{a}}sz Local Lemma: Point to Set Correlations
                  and Their Algorithmic Applications},
  booktitle    = {60th {IEEE} Annual Symposium on Foundations of Computer Science, {FOCS}
                  2019, Baltimore, Maryland, USA, November 9-12, 2019},
  pages        = {725--744},
  publisher    = {{IEEE} Computer Society},
  year         = {2019}
  }

@article{Bernshteyn19,
  title={The Johansson-Molloy theorem for DP-coloring},
  author={Bernshteyn, Anton},
  journal={Random Structures \& Algorithms},
  volume={54},
  number={4},
  pages={653--664},
  year={2019},
  publisher={Wiley Online Library}
}

@inproceedings{BeimelKMNSS22,
  author       = {Amos Beimel and
                  Haim Kaplan and
                  Yishay Mansour and
                  Kobbi Nissim and
                  Thatchaphol Saranurak and
                  Uri Stemmer},
  editor       = {Stefano Leonardi and
                  Anupam Gupta},
  title        = {Dynamic algorithms against an adaptive adversary: generic constructions
                  and lower bounds},
  booktitle    = {{STOC} '22: 54th Annual {ACM} {SIGACT} Symposium on Theory of Computing,
                  Rome, Italy, June 20 - 24, 2022},
  pages        = {1671--1684},
  publisher    = {{ACM}},
  year         = {2022}
  }

@inproceedings{BernsteinBKS25,
  author       = {Aaron Bernstein and
                  Sayan Bhattacharya and
                  Peter Kiss and
                  Thatchaphol Saranurak},
  editor       = {Michal Kouck{\'{y}} and
                  Nikhil Bansal},
  title        = {Deterministic Dynamic Maximal Matching in Sublinear Update Time},
  booktitle    = {Proceedings of the 57th Annual {ACM} Symposium on Theory of Computing,
                  {STOC} 2025, Prague, Czechia, June 23-27, 2025},
  pages        = {132--143},
  publisher    = {{ACM}},
  year         = {2025}
  }

@inproceedings{AlonA20,
  author       = {Noga Alon and
                  Sepehr Assadi},
  editor       = {Jaroslaw Byrka and
                  Raghu Meka},
  title        = {Palette Sparsification Beyond ({\(\Delta\)}+1) Vertex Coloring},
  booktitle    = {Approximation, Randomization, and Combinatorial Optimization. Algorithms
                  and Techniques, {APPROX/RANDOM} 2020, Virtual Conference, August 17-19,
                  2020},
  series       = {LIPIcs},
  volume       = {176},
  pages        = {6:1--6:22},
  publisher    = {Schloss Dagstuhl - Leibniz-Zentrum f{\"{u}}r Informatik},
  year         = {2020}
  }

@article{BonamyKNP22,
  title={Bounding {$\chi$} by a fraction of {$\Delta$} for graphs without large cliques},
  author={Bonamy, Marthe and Kelly, Tom and Nelson, Peter and Postle, Luke},
  journal={Journal of Combinatorial Theory, Series B},
  volume={157},
  pages={263--282},
  year={2022},
  publisher={Elsevier}
}

@inproceedings{PettieS13,
  author       = {Seth Pettie and
                  Hsin{-}Hao Su},
  editor       = {Fedor V. Fomin and
                  Rusins Freivalds and
                  Marta Z. Kwiatkowska and
                  David Peleg},
  title        = {Fast Distributed Coloring Algorithms for Triangle-Free Graphs},
  booktitle    = {Automata, Languages, and Programming - 40th International Colloquium,
                  {ICALP} 2013, Riga, Latvia, July 8-12, 2013, Proceedings, Part {II}},
  series       = {Lecture Notes in Computer Science},
  volume       = {7966},
  pages        = {681--693},
  publisher    = {Springer},
  year         = {2013}
  }

@article{Jamall11,
  title={A coloring algorithm for triangle-free graphs},
  author={Jamall, Mohammad Shoaib},
  journal={arXiv preprint arXiv:1101.5721},
  year={2011}
}

@inproceedings{FlinH25,
  author       = {Maxime Flin and
                  Magn{\'{u}}s M. Halld{\'{o}}rsson},
  editor       = {Keren Censor{-}Hillel and
                  Fabrizio Grandoni and
                  Jo{\"{e}}l Ouaknine and
                  Gabriele Puppis},
  title        = {Faster Dynamic {$(\Delta+1)$}-Coloring Against Adaptive Adversaries},
  booktitle    = {52nd International Colloquium on Automata, Languages, and Programming,
                  {ICALP} 2025, Aarhus, Denmark, July 8-11, 2025},
  series       = {LIPIcs},
  volume       = {334},
  pages        = {79:1--79:21},
  publisher    = {Schloss Dagstuhl - Leibniz-Zentrum f{\"{u}}r Informatik},
  year         = {2025}
  }

@inproceedings{ChristiansenR22,
  author       = {Aleksander B. G. Christiansen and
                  Eva Rotenberg},
  editor       = {Mikolaj Bojanczyk and
                  Emanuela Merelli and
                  David P. Woodruff},
  title        = {Fully-Dynamic {$\alpha+2$} Arboricity Decompositions and Implicit
                  Colouring},
  booktitle    = {49th International Colloquium on Automata, Languages, and Programming,
                  {ICALP} 2022, Paris, France, July 4-8, 2022},
  series       = {LIPIcs},
  volume       = {229},
  pages        = {42:1--42:20},
  publisher    = {Schloss Dagstuhl - Leibniz-Zentrum f{\"{u}}r Informatik},
  year         = {2022}
  }

@inproceedings{ChristiansenNR23,
  author       = {Aleksander Bj{\o}rn Grodt Christiansen and
                  Krzysztof Nowicki and
                  Eva Rotenberg},
  editor       = {Barna Saha and
                  Rocco A. Servedio},
  title        = {Improved Dynamic Colouring of Sparse Graphs},
  booktitle    = {Proceedings of the 55th Annual {ACM} Symposium on Theory of Computing,
                  {STOC} 2023, Orlando, FL, USA, June 20-23, 2023},
  pages        = {1201--1214},
  publisher    = {{ACM}},
  year         = {2023}
  }

@article{HenzingerP22,
  author       = {Monika Henzinger and
                  Pan Peng},
  title        = {Constant-time Dynamic {$(\Delta+1)$}-Coloring},
  journal      = {{ACM} Trans. Algorithms},
  volume       = {18},
  number       = {2},
  pages        = {16:1--16:21},
  year         = {2022}
  }

@article{BhattacharyaGKL22,
  author       = {Sayan Bhattacharya and
                  Fabrizio Grandoni and
                  Janardhan Kulkarni and
                  Quanquan C. Liu and
                  Shay Solomon},
  title        = {Fully Dynamic {$(\Delta+1)$}-Coloring in {$O(1)$} Update Time},
  journal      = {{ACM} Trans. Algorithms},
  volume       = {18},
  number       = {2},
  pages        = {10:1--10:25},
  year         = {2022}
  }

@inproceedings{ChristiansenRV24,
  author       = {Aleksander B. G. Christiansen and
                  Eva Rotenberg and
                  Juliette Vlieghe},
  editor       = {Hans L. Bodlaender},
  title        = {Sparsity-Parameterised Dynamic Edge Colouring},
  booktitle    = {19th Scandinavian Symposium and Workshops on Algorithm Theory, {SWAT}
                  2024, Helsinki, Finland, June 12-14, 2024},
  series       = {LIPIcs},
  volume       = {294},
  pages        = {20:1--20:18},
  publisher    = {Schloss Dagstuhl - Leibniz-Zentrum f{\"{u}}r Informatik},
  year         = {2024}
  }

@inproceedings{DuanHZ19,
  author       = {Ran Duan and
                  Haoqing He and
                  Tianyi Zhang},
  editor       = {Timothy M. Chan},
  title        = {Dynamic Edge Coloring with Improved Approximation},
  booktitle    = {Proceedings of the Thirtieth Annual {ACM-SIAM} Symposium on Discrete
                  Algorithms, {SODA} 2019, San Diego, California, USA, January 6-9,
                  2019},
  pages        = {1937--1945},
  publisher    = {{SIAM}},
  year         = {2019}
  }

@inproceedings{BehnezhadRW25,
  author       = {Soheil Behnezhad and
                  Rajmohan Rajaraman and
                  Omer Wasim},
  editor       = {Yossi Azar and
                  Debmalya Panigrahi},
  title        = {Fully Dynamic {$(\Delta+1)$}-Coloring Against Adaptive Adversaries},
  booktitle    = {Proceedings of the 2025 Annual {ACM-SIAM} Symposium on Discrete Algorithms,
                  {SODA} 2025, New Orleans, LA, USA, January 12-15, 2025},
  pages        = {4983--5026},
  publisher    = {{SIAM}},
  year         = {2025}
  }

@inproceedings{BhattacharyaCPS24,
  author       = {Sayan Bhattacharya and
                  Mart{\'{\i}}n Costa and
                  Nadav Panski and
                  Shay Solomon},
  editor       = {David P. Woodruff},
  title        = {Nibbling at Long Cycles: Dynamic (and Static) Edge Coloring in Optimal
                  Time},
  booktitle    = {Proceedings of the 2024 {ACM-SIAM} Symposium on Discrete Algorithms,
                  {SODA} 2024, Alexandria, VA, USA, January 7-10, 2024},
  pages        = {3393--3440},
  publisher    = {{SIAM}},
  year         = {2024}
  }

@inproceedings{BhattacharyaCHN18,
  author       = {Sayan Bhattacharya and
                  Deeparnab Chakrabarty and
                  Monika Henzinger and
                  Danupon Nanongkai},
  editor       = {Artur Czumaj},
  title        = {Dynamic Algorithms for Graph Coloring},
  booktitle    = {Proceedings of the Twenty-Ninth Annual {ACM-SIAM} Symposium on Discrete
                  Algorithms, {SODA} 2018, New Orleans, LA, USA, January 7-10, 2018},
  pages        = {1--20},
  publisher    = {{SIAM}},
  year         = {2018}
  }

@book{MolloyR02,
  title={\textbf{Graph colouring and the probabilistic method}},
  author={Molloy, Michael and Reed, Bruce},
  volume={23},
  year={2002},
  publisher={Springer Science \& Business Media}
}

@inproceedings{BansalGG15,
  author       = {Nikhil Bansal and
                  Anupam Gupta and
                  Guru Guruganesh},
  editor       = {Rocco A. Servedio and
                  Ronitt Rubinfeld},
  title        = {On the Lov{\'{a}}sz Theta function for Independent Sets in Sparse
                  Graphs},
  booktitle    = {Proceedings of the Forty-Seventh Annual {ACM} on Symposium on Theory
                  of Computing, {STOC} 2015, Portland, OR, USA, June 14-17, 2015},
  pages        = {193--200},
  publisher    = {{ACM}},
  year         = {2015}
  }

@article{AlonKS99,
  title={Coloring graphs with sparse neighborhoods},
  author={Alon, Noga and Krivelevich, Michael and Sudakov, Benny},
  journal={Journal of Combinatorial Theory, Series B},
  volume={77},
  number={1},
  pages={73--82},
  year={1999},
  publisher={Elsevier}
}

@article{Reed99,
  title={A strengthening of Brooks' theorem},
  author={Reed, Bruce},
  journal={Journal of Combinatorial Theory, Series B},
  volume={76},
  number={2},
  pages={136--149},
  year={1999},
  publisher={Academic Press}
}

@article{MolloyR14,
  title={Colouring graphs when the number of colours is almost the maximum degree},
  author={Molloy, Michael and Reed, Bruce},
  journal={Journal of Combinatorial Theory, Series B},
  volume={109},
  pages={134--195},
  year={2014},
  publisher={Elsevier}
}

@techreport{Johansson96,
  title={Asymptotic choice number for triangle free graphs},
  author={Johansson, Anders},
  year={1996},
  institution={Technical report 91-5, DIMACS}
}

@article{Kim95,
  title={On Brooks' theorem for sparse graphs},
  author={Kim, Jeong Han},
  journal={Combinatorics, Probability and Computing},
  volume={4},
  number={2},
  pages={97--132},
  year={1995},
  publisher={Cambridge University Press}
}

@inproceedings{Brooks41,
  title={On colouring the nodes of a network},
  author={Brooks, Rowland Leonard},
  booktitle={Mathematical Proceedings of the Cambridge Philosophical Society},
  volume={37},
  number={2},
  pages={194--197},
  year={1941},
  organization={Cambridge University Press}
}

@article{PS97,
author = {Panconesi, Alessandro and Srinivasan, Aravind},
title = {Randomized Distributed Edge Coloring via an Extension of the Chernoff--Hoeffding Bounds},
year = {1997},
issue_date = {April 1997},
publisher = {Society for Industrial and Applied Mathematics},
address = {USA},
volume = {26},
number = {2},
issn = {0097-5397},
url = {https://doi.org/10.1137/S0097539793250767},
doi = {10.1137/S0097539793250767},
journal = {SIAM J. Comput.},
month = apr,
pages = {350–368},
numpages = {19}
}

@misc{Molloy18,
      title={The list chromatic number of graphs with small clique number}, 
      author={Michael Molloy},
      year={2018},
      eprint={1701.09133},
      archivePrefix={arXiv},
      primaryClass={math.CO},
      url={https://arxiv.org/abs/1701.09133}, 
}

@article{joagProschan1983,
  title={Negative association of random variables with applications},
  author={Joag-Dev, Kumar and Proschan, Frank},
  journal={The Annals of Statistics},
  pages={286--295},
  year={1983},
  publisher={JSTOR}
}

@inproceedings{GrableP98,
  author       = {David A. Grable and
                  Alessandro Panconesi},
  editor       = {Howard J. Karloff},
  title        = {Fast Distributed Algorithms for \{Brooks-Vizing\} Colourings},
  booktitle    = {Proceedings of the Ninth Annual {ACM-SIAM} Symposium on Discrete Algorithms,
                  25-27 January 1998, San Francisco, California, {USA}},
  pages        = {473--480},
  publisher    = {{ACM/SIAM}},
  year         = {1998}
  }

@article{Vizing68,
  title={Some unsolved problems in graph theory},
  author={Vizing, Vadim G},
  journal={Russian Mathematical Surveys},
  volume={23},
  number={6},
  pages={125},
  year={1968},
  publisher={IOP Publishing}
}

@inproceedings{Moser09,
  author       = {Robin A. Moser},
  editor       = {Michael Mitzenmacher},
  title        = {A constructive proof of the Lov{\'{a}}sz local lemma},
  booktitle    = {Proceedings of the 41st Annual {ACM} Symposium on Theory of Computing,
                  {STOC} 2009, Bethesda, MD, USA, May 31 - June 2, 2009},
  pages        = {343--350},
  publisher    = {{ACM}},
  year         = {2009}
  }

@article{MoserT10,
  author       = {Robin A. Moser and
                  G{\'{a}}bor Tardos},
  title        = {A constructive proof of the general lov{\'{a}}sz local lemma},
  journal      = {J. {ACM}},
  volume       = {57},
  number       = {2},
  pages        = {11:1--11:15},
  year         = {2010}
  }

@misc{Tao09,
  author       = {Tao, Terence},
  title        = {Moser's Entropy Compression Argument},
  howpublished = {\url{https://terrytao.wordpress.com/2009/08/05/mosers-entropy-compression-argument/}},
  note         = {Blog post on \emph{What's New}, August 5},
  year         = {2009}
}

@article{Reed99c,
  title={The list colouring constants},
  author={Reed, Bruce},
  journal={Journal of Graph Theory},
  volume={31},
  number={2},
  pages={149--153},
  year={1999},
  publisher={John Wiley \& Sons, Inc. New York, NY, USA}
}

@article{ReedS02,
  title={Asymptotically the list colouring constants are 1},
  author={Reed, Bruce and Sudakov, Benny},
  journal={Journal of Combinatorial Theory, Series B},
  volume={86},
  number={1},
  pages={27--37},
  year={2002},
  publisher={Elsevier}
}

@inproceedings{BernsteinBGNSS022,
  author       = {Aaron Bernstein and
                  Jan van den Brand and
                  Maximilian Probst Gutenberg and
                  Danupon Nanongkai and
                  Thatchaphol Saranurak and
                  Aaron Sidford and
                  He Sun},
  editor       = {Mikolaj Bojanczyk and
                  Emanuela Merelli and
                  David P. Woodruff},
  title        = {Fully-Dynamic Graph Sparsifiers Against an Adaptive Adversary},
  booktitle    = {49th International Colloquium on Automata, Languages, and Programming,
                  {ICALP} 2022, Paris, France, July 4-8, 2022},
  series       = {LIPIcs},
  pages        = {20:1--20:20},
  publisher    = {Schloss Dagstuhl - Leibniz-Zentrum f{\"{u}}r Informatik},
  year         = {2022}
  }

@inproceedings{BhattacharyaSS22,
  author       = {Sayan Bhattacharya and
                  Thatchaphol Saranurak and
                  Pattara Sukprasert},
  editor       = {Shiri Chechik and
                  Gonzalo Navarro and
                  Eva Rotenberg and
                  Grzegorz Herman},
  title        = {Simple Dynamic Spanners with Near-Optimal Recourse Against an Adaptive
                  Adversary},
  booktitle    = {30th Annual European Symposium on Algorithms, {ESA} 2022, Berlin/Potsdam,
                  Germany, September 5-9, 2022},
  series       = {LIPIcs},
  pages        = {17:1--17:19},
  publisher    = {Schloss Dagstuhl - Leibniz-Zentrum f{\"{u}}r Informatik},
  year         = {2022}
  }

@article{HaeuplerP26,
  title={Maintaining Random Assignments under Adversarial Dynamics},
  author={Haeupler, Bernhard and Paramonov, Anton},
  journal={arXiv preprint arXiv:2604.05606},
  year={2026}
}

@article{HaeuplerMRST26,
  title={Dynamic Construction of the Lovasz Local Lemma},
  author={Haeupler, Bernhard and Slobodan Mitrovic and Srikkanth Ramachandran and Wen-Horng Sheu and Robert Tarjan},
  journal={Manuscript, April 2026.},
  year={2026}
}
	

\end{document}